\renewcommand\leq\leqslant
\renewcommand\geq\geqslant
\newcommand{\LRS}{{\sc Longest Run Subsequence}\xspace}
\newcommand{\TwoLRS}{{\sc 2-Longest Run Subsequence}\xspace}
\newcommand{\MISC}{{\sc MISC}\xspace}
\newcommand{\Pb}[4]{%
\begin{center}
  \begin{tabular}{|l|}%
  \hline
    \begin{minipage}[c]{\textwidth}
      \smallskip%
      \par\noindent%
      #1%
      \par\noindent%
      $\bullet$
      \textbf{\textsf{Input}}: #2%
      \par\noindent%
      $\bullet$
      \textbf{\textsf{#4}}: Does there exist %
      #3?%
      \smallskip%
      \par\noindent%
    \end{minipage}
  \\\hline
  \end{tabular}%
\end{center}
}%
\title{The Longest Run Subsequence Problem: Further Complexity Results}
\author{Riccardo Dondi}{Universit\`a degli Studi di Bergamo, Bergamo, Italy}{riccardo.dondi@unibg.it}{https://orcid.org/0000-0002-6124-2965}{}%
\author{Florian Sikora}{Universit\'{e} Paris-Dauphine, PSL University, CNRS, LAMSADE, 75016 Paris, France}{florian.sikora@dauphine.fr}{https://orcid.org/0000-0003-2670-6258}{Partially funded by ESIGMA (ANR-17-CE23-0010)}
\authorrunning{R. Dondi and F. Sikora} %
\keywords{Parameterized complexity , Kernelization , Approximation Hardness , Longest Subsequence} %
\date{}
\begin{document}

\maketitle

\begin{abstract}
\LRS is a problem introduced recently  in the
context of the scaffolding phase of genome assembly (Schrinner et al., WABI 2020). 
The problem asks for a maximum length subsequence of a given
string that contains at most one run for each symbol (a run is a maximum substring of consecutive identical symbols).
The problem has been shown to be NP-hard and 
to be fixed-parameter tractable when the parameter
is the size of the alphabet on which the input string
is defined. 
In this paper we further investigate the complexity of
the problem and we show that it is fixed-parameter
tractable when it is parameterized by the number of runs
in a solution, a smaller parameter. 
Moreover, we investigate the kernelization
complexity of \LRS and we prove that it does not
admit a polynomial kernel when parameterized
by the size of the alphabet or by the number of runs.
Finally, we consider the restriction of \LRS when 
each symbol has at most two occurrences in the input
string and we show that it is APX-hard.
\end{abstract}

\section{Introduction}

A fundamental problem in computational genomics
is genome assembly, whose goal is
reconstructing a genome given a set of reads (a read is a sequence of base pairs)~\cite{Alonge519637,bioinfo2020}. 
After the generation of initial assemblies, called \emph{contigs}, they have
to be ordered correctly, in a phase called \emph{scaffolding}. 
One of the commonly used approaches for scaffolding
is to consider two (or more) incomplete
assemblies of related samples, thus allowing
the alignment of contigs based on their similarities \cite{Goel546622}.
However, the presence of genomic repeats and
structural differences may lead %
to misleading connections between contigs. 

Consider two sets $X$, $Y$ of contigs, 
such that the order of contigs in $Y$ has
to be inferred using the contigs in $X$. 
Each contig in $X$ is divided into equal size 
bins and each bin is mapped to a contig in $Y$ (based
on best matches). 
As a consequence, each bin in $X$ can be partitioned based
on the mapping to contigs of $Y$.
However this mapping of bins to contigs,
due to errors (in the sequencing or in the mapping
process) or mutations, 
may present some inconsistencies,
in particular bins can be mapped to scattered contigs,
thus leading to an inconsistent partition of $X$,
as shown in Fig. \ref{fig:Example}.
In order to infer the most likely partition of $X$
(and then distinguish between the transition
from one contig to the other and errors in the
mapping),  the method proposed in \cite{DBLP:conf/wabi/SchrinnerGWSSK20} 
asks for a longest subsequence of the contig matches in $X$ such that each contig run occurs at most 
once (see Fig. \ref{fig:Example} for an example). 

\begin{figure}
\centering
\includegraphics[scale=.4]{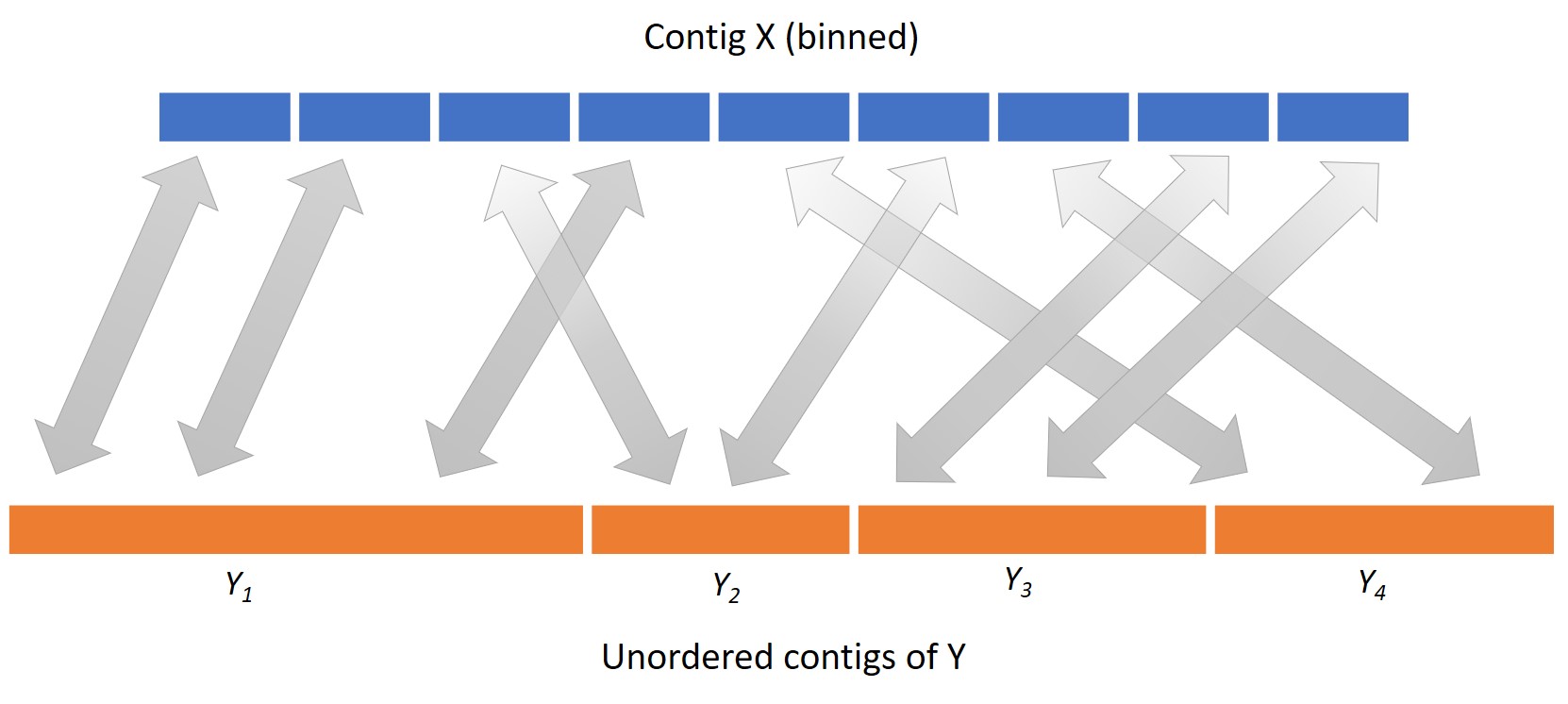}
\caption{
An example of matching
a binned contig ($X$)
with the unordered
contigs of $Y$.
The string inferred
from this matching is 
$S =y_1~y_1~y_2~y_1~y_4~y_2~y_4~y_3~y_3$.
Notice that $S$ induces an inconsistent
partition of the bins of $X$, for example for the mapping
of $Y_1$ and $Y_2$. Indeed, $Y_1$ is mapped in the first, second and fourth bin of $X$, while $Y_2$ is mapped 
in the third and sixth bin of $X$.
A longest run subsequence $R$ of $S$
is $R = y_1~y_1~y_1~y_4~y_4~y_3~y_3$, that induces
a partition of some bins of $X$.
}
\label{fig:Example}
\end{figure}

This problem, called \LRS, has been recently introduced 
and studied by Schrinner et al.~\cite{DBLP:conf/wabi/SchrinnerGWSSK20}.
\LRS has been shown to be 
NP-hard~\cite{DBLP:conf/wabi/SchrinnerGWSSK20}
and fixed-parameter tractable when the parameter
is the size of the alphabet on which the input string
is defined~\cite{DBLP:conf/wabi/SchrinnerGWSSK20}.
Furthermore, an integer linear program has
been given for the problem~\cite{DBLP:conf/wabi/SchrinnerGWSSK20}.
Schrinner et al. let as future work approximability and parameterized complexity results on the problem~\cite{DBLP:conf/wabi/SchrinnerGWSSK20}.
Note that this problem could be seen as close to the "run-length encoded" string problems in the string literature, where a string is described as a sequence of symbols followed by the number of its consecutive occurrences, i.e. a sequence of runs where only its symbol and its length is stored (see for example~\cite{DBLP:journals/jc/ApostolicoLS99}).
While finding the longest common subsequence between two of such strings is a polynomial task, our problem is, to the best of our knowledge, not studied in literature before the work of Schrinner et al.~\cite{DBLP:conf/wabi/SchrinnerGWSSK20}.

In this paper we further investigate the complexity of
the \LRS problem.
We start in Section~\ref{sec:definition} by introducing
some definitions and by giving the formal definition
of the problem.
Then in Section \ref{sec:FPT} we give a randomized
fixed-parameter algorithm, where the parameter is the number of runs
in a solution, based on the multilinear detection technique. In Section \ref{sec:kernelization}, we investigate the kernelization
complexity of \LRS and we prove that it does not
admit a polynomial kernel when parameterized
by the size of the alphabet or by the number of runs.
Notice that the problem admits a polynomial kernel
when parameterized by the length of the solution
(see Observation \ref{obs:sp}).
Finally, in Section \ref{sec:APX2occ} we consider the restriction of \LRS when 
each symbol has at most two occurrences in the input
string and we show that it is APX-hard.
We conclude the paper with some open problems.

\section{Definitions}
\label{sec:definition}

In this section we introduce the main definitions
we need in the rest of the paper.

\subparagraph*{Problem Definition}
Given a string $S$, 
$|S|$ denotes the length of the string; 
$S[i]$, with $1 \leq i \leq |S|$,
denotes the symbol of $S$ in position $i$,
$S[i,j]$,
with $1 \leq i \leq j \leq |S|$, denotes
the substring of $S$ that starts in position $i$ 
and ends in position $j$. Notice that if $i=j$,
then $S[i,i]$ is the symbol $S[i]$.
Given a symbol $a$, we denote by $a^p$, for some integer
$p \geq 1$, a string consisting of the concatenation of 
$p$ occurrences
of symbol $a$.

A \emph{run} in $S$ is a substring
$S[i,j]$, with $1 \leq i \leq j \leq |S|$,
such that $S[z] = a$, for each $i \leq z \leq j$,
with %
$a \in  \Sigma $.
Given $a \in \Sigma$, an $a$-run is a run in $S$
consisting of repetitions of symbol $a$.
Given a string $S$ on alphabet $\Sigma$, a \emph{run subsequence} $S'$
of $S$ is a subsequence that contains \emph{at most
one run} for each symbol $a \in \Sigma$.

Now, we are ready to define the \LRS problem.

\Pb{\LRS}{A string $S$ on alphabet $\Sigma$, an integer $k$.}{a run subsequence $R$ of length $k$}{Output}

A string $S[i,j]$ contains an $a$-run
with $a \in \Sigma$, 
if it
contains a run subsequence which is an $a$-run. 
A run subsequence of $S$ which is an $a$-run, 
with $a \in \Sigma$, is maximal  if it contains
all the occurrences of symbol $a$ in $S$.
Note that an optimal solution
may not take maximal runs.
For example, 
consider 
\[S = a b a c  a a  b b a b\]
an optimal run subsequence in $S$ is
\[R = a a a a b b b\] 
Note that no run in $R$ is maximal and even that some symbol of $\Sigma$ may not be in an optimal solution of \LRS{}, in the example no 
$c$-run belongs to $R$.

\subparagraph*{Graph Definitions}

Given a graph $G=(V,E)$, we 
denote by $N(v) = \{ u: \{ u,v\} \in E  \}$,
the neighbourhood of $v$.
The closed neighbourhood of $v$ is $N[v] = N(v) \cup \{v\}$.
$V' \subset V$ is an independent set
when $\{ u,v \} \notin E$ for each $u,v \in V'$.
We recall that a graph $G=(V,E)$ is \emph{cubic} when 
$|N(v)|=3$ for each $v \in V$.

\subparagraph*{Parameterized Complexity}

A \emph{parameterized problem} is a decision problem specified together with a \emph{parameter}, that is, an integer $k$ depending on the instance.
A problem is \emph{fixed-parameter tractable} (FPT for short) if it can be solved in time $f(k)\cdot|I|^c$ (often briefly referred to as FPT\emph{-time}) for an instance $I$ of size $|I|$ with parameter $k$, where $f$ is a computable function and $c$ is a constant. %
Given a parameterized problem $P$, a \emph{kernel} is a polynomial-time computable function which associates with each instance of $P$ an equivalent instance of $P$ whose size is bounded by a function $h$ of the parameter. When $h$ is a polynomial, the kernel is said to be \emph{polynomial}. 
See the book~\cite{CFK+15} for more details.

In order to prove that such polynomial kernel is unlikely, we need additional definitions and results. %

\begin{definition}[Cross-Composition~\cite{Bodlaender2014}]
\label{def:CrossComp}
We say that a problem $L$ \emph{cross-composes} to a parameterized problem $Q$ if there is a polynomial equivalence relation $\mathcal{R}$ and an algorithm which given $t$ instances $x_1,x_2,\ldots,x_t$ of $L$ belonging to the same equivalence class $\mathcal{R}$, computes an instance $(x^*,k^*)$ of $Q$ in time polynomial in $\sum_{i=1}^t|x_i|$ such that (i) $(x^*,k^*)\in Q \iff x_i \in L$ for some $i$ and (ii) $k^*$ is bounded by a polynomial in $(\max_i|x_i|+\log t)$.
\end{definition}

This definition is useful for the following result, which we will use to prove that a polynomial kernel for \LRS with parameter $|\Sigma|$ is unlikely.

\begin{theorem}[\cite{Bodlaender2014}]
\label{teo:CrossComp}
If an NP-hard problem $L$ has a cross-composition into a parameterized problem $Q$ and $Q$ has a polynomial kernel, then NP $\subseteq$ coNP/poly.
\end{theorem}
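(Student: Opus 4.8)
The plan is to derive from the hypotheses an \emph{OR-distillation} of $L$ and then invoke the distillation lower bound of Fortnow and Santhanam (strengthened via the complementary-witness lemma of Dell and van Melkebeek), which asserts that an NP-hard language admitting such a distillation forces NP $\subseteq$ coNP/poly. Concretely, I would argue that a cross-composition of $L$ into $Q$ (Definition~\ref{def:CrossComp}), composed with a hypothetical polynomial kernel for $Q$, compresses the logical OR of many $L$-instances into objects of size polynomial in the largest input, which is exactly what the distillation framework forbids for NP-hard languages unless the collapse occurs.

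First I would set up the distillation. Given $t$ instances $x_1,\ldots,x_t$ of $L$, I use the polynomial equivalence relation $\mathcal{R}$ to partition them in polynomial time; by definition of a polynomial equivalence relation there are only $(\max_i |x_i|)^{O(1)}$ classes, so it suffices to process one class at a time and then take a disjunction. Within a class I run the cross-composition to obtain an instance $(x^*,k^*)$ of $Q$ with $k^*$ bounded by a polynomial in $\max_i |x_i| + \log t$ and with $(x^*,k^*)\in Q$ iff some instance of that class lies in $L$. Applying the assumed polynomial kernelization to $(x^*,k^*)$ yields an equivalent instance of $Q$ of size at most $h(k^*)$, hence polynomial in $\max_i |x_i| + \log t$. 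Since the answer to ``does some $x_i$ belong to $L$?'' is the OR over the classes of ``this class contains a yes-instance'', the whole procedure maps the $t$ inputs to polynomially many small $Q$-instances whose disjunction encodes membership, i.e.\ an OR-distillation of $L$.

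The main obstacle is the bookkeeping of sizes, specifically ensuring the output qualifies as a distillation despite the $\log t$ term appearing in the bound on $k^*$. There are $(\max_i |x_i|)^{O(1)}$ classes, each producing a kernel of size polynomial in $\max_i |x_i| + \log t$, so the total output size stays polynomial in $\max_i |x_i| + \log t$; the crucial point is that the distillation lower bound is formulated precisely to tolerate this logarithmic dependence on the number of instances, rather than demanding a bound in $\max_i |x_i|$ alone. Once this is verified, the conclusion is immediate: because $L$ is NP-hard, the existence of such a distillation yields NP $\subseteq$ coNP/poly, which is the claimed statement. Since this is an established result, in the write-up I would either reproduce the argument above or simply refer the reader to~\cite{Bodlaender2014}.
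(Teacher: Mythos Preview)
The paper does not prove this theorem at all: it is stated with a citation to~\cite{Bodlaender2014} and used as a black box in Section~\ref{sec:kernelization}. Your sketch is the standard argument behind that cited result (cross-composition composed with a polynomial kernel yields an OR-distillation, and the Fortnow--Santhanam / Dell--van~Melkebeek lower bound then gives the collapse), and it is essentially correct, including your remark about the tolerated $\log t$ term. Your closing suggestion---to simply refer the reader to~\cite{Bodlaender2014}---is exactly what the paper does, so in that sense your proposal and the paper coincide.
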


For our FPT algorithm, we will reduce our problem to another problem, called $k$-\textsc{Multilinear Detection} problem ($k$-MLD), which can be solved efficiently.
In this problem, we are given a polynomial over a set of variables $X$, represented as an arithmetic circuit $\mathcal{C}$, and the question is to decide if this polynomial contains a multilinear term of degree exactly $k$.
A polynomial is a sum of monomials.
The \emph{degree} of a monomial is the sum of its variables degrees and a monomial is \emph{multilinear} if the degree of all its variables is equal to 1 (therefore, a multilinear monomial of degree $k$ contains $k$ different variables).
For example, $x_1^2x_2+x_1x_2x_3$ is a polynomial over 3 variables, both monomials are of degree 3 but only the second one is multilinear.

Note that the size of the polynomial could be exponentially large in $|X|$ and thus we cannot just check each monomial.
We will therefore encode the polynomial in a compressed form:
the circuit $\mathcal{C}$ is represented as a Directed Acyclic Graph (DAG), where leaves are variables $X$ and internal nodes are multiplications or additions. 
The following result is fundamental for $k$-MLD.

\begin{theorem}[\cite{DBLP:conf/icalp/Koutis08,DBLP:journals/ipl/Williams09}]\label{thm:kmld}
There exists a randomized algorithm solving $k$-MLD in time $O(2^k|C|)$ and $O(|C|)$ space.
\end{theorem}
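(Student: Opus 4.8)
This statement is quoted from prior work, so my plan is to reconstruct the group-algebra argument underlying multilinear detection (Koutis, Williams). The framework is to work over the group algebra $R = \mathbb{F}_{2^\ell}[\mathbb{Z}_2^k]$ of $(\mathbb{Z}_2^k,\oplus)$ over a characteristic-$2$ extension field, with $\ell=\Theta(\log k)$ (taking the coefficients of the input polynomial into $\mathbb{F}_{2^\ell}$ as part of the setup). An element of $R$ is a formal $\mathbb{F}_{2^\ell}$-combination of the $2^k$ basis elements $e_{\vec v}$, $\vec v\in\mathbb{Z}_2^k$, with product $e_{\vec u}e_{\vec v}=e_{\vec u\oplus\vec v}$. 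The decisive identity is that every basis element is an involution, $e_{\vec v}^2=e_{\vec 0}=1$, so in characteristic $2$ we get $(1+e_{\vec v})^2 = 1+2e_{\vec v}+1 = 0$. The randomized substitution replaces each variable $x_i$ by $c_i\,(1+e_{\vec v_i})\in R$, where $\vec v_i\in\mathbb{Z}_2^k$ and $c_i\in\mathbb{F}_{2^\ell}$ are drawn independently and uniformly; to pin down degree \emph{exactly} $k$ I would homogenize with a fresh indeterminate $z$ (sending $x_i\mapsto z\,x_i$) and carry all arithmetic in $R[z]/(z^{k+1})$. The algorithm evaluates the circuit $\mathcal{C}$ gate by gate in this ring and accepts iff the coefficient of $z^k$ in the result is a nonzero element of $R$.

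For correctness, the square-kills-it identity makes any monomial with a repeated variable evaluate to $0$, so only multilinear monomials contribute. A multilinear monomial over a variable set $S$ contributes, up to the scalar $\prod_{i\in S}c_i$, the factor $\prod_{i\in S}(1+e_{\vec v_i}) = \sum_{T\subseteq S} e_{\bigoplus_{i\in T}\vec v_i}$. This factor is nonzero exactly when the vectors $\{\vec v_i\}_{i\in S}$ are linearly independent over $\mathbb{F}_2$: independence forces all $2^{|S|}$ subset-sums to be distinct basis elements (no cancellation), while a dependence supplies a nonempty $T_0$ with $\bigoplus_{i\in T_0}\vec v_i=\vec 0$, and the fixed-point-free involution $T\mapsto T\triangle T_0$ pairs equal summands that cancel modulo $2$. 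For a genuine degree-$k$ monomial this independence event is the event that a uniform $k\times k$ matrix over $\mathbb{F}_2$ is invertible, with probability $\prod_{i=1}^{k}(1-2^{-i})\ge c_0>0$ (a constant $\approx 0.288$). Finally, distinct multilinear monomials have distinct variable sets, hence distinct $c$-monomials $\prod_{i\in S}c_i$; treating the $c_i$ as formal variables, the coefficient of $z^k$ is a polynomial of degree $\le k$ in the $c_i$ that is nonzero whenever some present degree-$k$ monomial has an independent vector tuple, so Schwartz--Zippel over $\mathbb{F}_{2^\ell}$ (with $\ell$ chosen so that $k/2^\ell$ is small) rules out accidental cross-cancellation with probability $1-o(1)$. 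Thus a NO instance yields the identically-zero coefficient (no false positives), a YES instance yields nonzero with constant probability, and $O(1)$ independent repetitions amplify the success probability.

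For the resource bounds, each element of $R$ is stored by its $2^k$ coordinates over $\mathbb{F}_{2^\ell}$; additions are coordinatewise and cost $O(2^k)$ operations, and a group-algebra multiplication (the $\oplus$-convolution, carried over into $R[z]/(z^{k+1})$) is likewise organized to cost $O(2^k)$ per gate in the relevant unit-cost accounting, so a topological sweep over the $|\mathcal{C}|$ gates gives the $O(2^k|\mathcal{C}|)$ time, while keeping only the live intermediate elements yields the $O(|\mathcal{C}|)$ space bound. The part I expect to be the real work is twofold: proving the algebraic lemma that $\prod_{i\in S}(1+e_{\vec v_i})\ne 0$ iff the $\vec v_i$ are independent together with the constant lower bound on that probability, and controlling cancellation among distinct good monomials through the random scalars and a careful Schwartz--Zippel estimate, which is also what dictates the required size of $\ell$. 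Everything else — that non-multilinear and wrong-degree terms vanish, and that the whole computation is a straight-line evaluation in $R[z]/(z^{k+1})$ — is routine bookkeeping.
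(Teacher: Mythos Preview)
The paper does not prove this theorem at all: it is quoted as a black box from Koutis and Williams and used only via its statement in the proof of Theorem~\ref{teo:LRSTeo}. So there is no ``paper's own proof'' to compare against; what you have written is a reconstruction of the argument in the cited references, and on that level your sketch is faithful to the Koutis--Williams approach (characteristic-$2$ group algebra of $\mathbb{Z}_2^k$, the squaring identity killing non-multilinear terms, independence of the random $\vec v_i$ for survival, and random field scalars plus Schwartz--Zippel to avoid cross-cancellation).

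Two technical points in your resource analysis deserve tightening. First, a single multiplication in $\mathbb{F}_{2^\ell}[\mathbb{Z}_2^k]$ is an XOR-convolution and costs $\Theta(k\,2^k)$ via the Walsh--Hadamard transform, not $O(2^k)$; the way Williams obtains $O(2^k)$ per gate is to push all inputs through the transform once up front (it is an $\mathbb{F}_{2^\ell}$-algebra isomorphism onto the pointwise-product algebra $\mathbb{F}_{2^\ell}^{2^k}$), evaluate the whole circuit with coordinatewise $+$ and $\times$, and invert the transform once at the end. Your phrase ``organized to cost $O(2^k)$ per gate'' hides exactly this step. Second, ``keeping only the live intermediate elements yields the $O(|\mathcal{C}|)$ space bound'' is not literally true in bits, since each live value already occupies $\Theta(2^k)$ field elements; the cited bound should be read as polynomial in $|\mathcal{C}|$ and $2^k$ (equivalently, $O(|\mathcal{C}|)$ ring elements), and your homogenization by $z$ adds a further $\mathrm{poly}(k)$ factor. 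None of this affects the use made of the theorem in the present paper.
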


\subparagraph*{Approximation}

In Section \ref{sec:APX2occ}, we prove the APX-hardness
of \LRS with at most two occurrences
for each symbol in $\Sigma$,
by designing an L-reduction
from {\sc Maximum Independent Set} on cubic graphs.
We recall here the definition of L-reduction.
Notice that, given a solution $S$ of a problem 
($A$ or $B$ in the definition), we denote
by $val(S)$ the value of $S$
(for example, in our problem, the length of
a run subsequence).

\begin{definition}[L-reduction \cite{DBLP:journals/jcss/PapadimitriouY91}]
\label{def:L-reduction}
Let $A$ and $B$ be two optimization problems. 
Then $A$ is said to be L-reducible to $B$ if there are two constants $\alpha, \beta > 0$ and two polynomial-time computable functions $f,g$ such that: 
(i) $f$ maps an instance $I$ of $A$ into an instance $I'$ of $B$ such that $opt_B(I') \leq \alpha \cdot opt_A(I)$,
(ii) $g$ maps each solution $S'$ of $I'$ into a solution $S$ of $I$ such that $|val(S)-opt_A(I)| \leq \beta \cdot |val(S')-opt_B(I')|$.
\end{definition}

L-reductions are useful in order to apply the following theorem.

\begin{theorem}[\cite{DBLP:journals/jcss/PapadimitriouY91}]\label{thm:noptas}
Let $A$ and $B$ be two optimization problems. 
If $A$ is L-reducible to $B$ and $B$ has a PTAS, then $A$ has a PTAS.
\end{theorem}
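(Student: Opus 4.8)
The plan is to obtain the PTAS for $A$ by composing the reduction maps $f$ and $g$ around the assumed PTAS for $B$, choosing the internal accuracy so that the error amplification controlled by $\alpha$ and $\beta$ stays below the target. Concretely, given an instance $I$ of $A$ and a desired relative error $\epsilon > 0$, I would first compute $I' = f(I)$ in polynomial time; then invoke the PTAS for $B$ on $I'$ with accuracy parameter $\epsilon' := \epsilon/(\alpha\beta)$ to produce a solution $S'$ of $I'$ whose relative error is at most $\epsilon'$, i.e.\ $|val(S') - opt_B(I')| \leq \epsilon'\, opt_B(I')$; and finally set $S := g(S')$, again in polynomial time. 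The claim is then that $S$ is an $\epsilon$-approximate solution of $I$.

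The heart of the argument is a one-line chaining of the two L-reduction inequalities together with the PTAS guarantee. Using property (ii) of Definition~\ref{def:L-reduction}, then the PTAS bound on $S'$, and finally property (i) relating $opt_B(I')$ to $opt_A(I)$, I would estimate
$$\frac{|val(S)-opt_A(I)|}{opt_A(I)} \leq \frac{\beta\,|val(S')-opt_B(I')|}{opt_A(I)} \leq \frac{\beta\,\epsilon'\,opt_B(I')}{opt_A(I)} \leq \frac{\beta\,\epsilon'\,\alpha\,opt_A(I)}{opt_A(I)} = \alpha\beta\epsilon' = \epsilon.$$
Thus the relative error of $S$ for $A$ is at most $\epsilon$, which is exactly what a PTAS for $A$ must deliver.

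For the running time, since $f$ and $g$ are polynomial-time computable, $|I'|$ is polynomial in $|I|$, and for every fixed $\epsilon$ (hence fixed $\epsilon'$) the PTAS for $B$ runs in time polynomial in $|I'|$; composing three polynomial-time steps yields an algorithm that, for each fixed $\epsilon$, runs in time polynomial in $|I|$. The main point to handle carefully is the bookkeeping around whether $A$ and $B$ are maximization or minimization problems: because Definition~\ref{def:L-reduction} and the PTAS guarantee are both phrased with absolute differences $|val(\cdot)-opt(\cdot)|$, the same chain of inequalities remains valid in all four combinations, so no case distinction is actually needed. The only genuine assumption to keep in view is $opt_A(I) > 0$, so that the relative error is well defined, which is standard for the optimization problems under consideration.
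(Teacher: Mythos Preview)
Your argument is correct and is in fact the standard textbook proof of this folklore consequence of L-reductions. Note, however, that the paper does not give its own proof of this theorem: it is stated with a citation to Papadimitriou and Yannakakis and used as a black box, so there is nothing to compare against. Your chaining of property~(ii), the PTAS guarantee, and property~(i), together with the choice $\epsilon' = \epsilon/(\alpha\beta)$, is exactly the intended derivation, and your remarks on running time and on the uniform treatment of maximization/minimization via absolute differences are accurate.
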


\subsection*{Parameterized Complexity Status of the Problem}

In the paper, we consider the parameterized complexity
of \LRS{} under the different parameterizations.
We consider the following parameters:

\begin{itemize}
\item The length $k$ of the solution of \LRS %

\item The size $|\Sigma|$ of the alphabet 

\item The number $r$ of runs in a solution of \LRS

\end{itemize}

Notice that $r \leq |\Sigma| \leq k$.
Indeed, there always exists a solution consisting
of one occurrence for each symbol in $\Sigma$,
hence we can assume that $|\Sigma| \leq k$.
Clearly, $r \leq |\Sigma|$, since
each run in a solution of \LRS is associated with a 
distinct symbol of $\Sigma$.

In Table~\ref{tab:my_label}, we present the status of the parameterized complexity of \LRS for these parameters.

\begin{table}[h]
    \centering
    \begin{tabular}{c|c|c}
                    & FPT   & Poly Kernel \\\hline
        $k$      &  Yes (Obs. \ref{obs:sp})    & \textbf{Yes} (Obs. \ref{obs:sp}) \\
        $|\Sigma|$  & Yes~\cite{DBLP:conf/wabi/SchrinnerGWSSK20}      &  \textbf{No} (Th. \ref{th:nokernel}) \\
        $r$         & \textbf{Yes \& Poly Space}~(Th.~\ref{teo:LRSTeo} )       & \textbf{No} (Cor. \ref{cor:nokernel}) \\

    \end{tabular}
    \caption{Parameterized complexity status for the three different parameters considered in this paper. Since these parameters are in decreasing value order, note that positive results propagate upwards, while negative results propagate downwards.
    In bold the new results we present in this paper.}
    \label{tab:my_label}
\end{table}

It is easy to see that \LRS{} has a polynomial kernel
for parameter $k$.

\begin{observationrep}
\LRS has a $k^2$ kernel.
\label{obs:sp}
\end{observationrep}
\begin{proof}
First, notice that if there exists an $a$-run $R'$ of length 
at 
least $k$, for some $a \in \Sigma$, then $R'$ is a
solution of \LRS.
Also note that if $|\Sigma| \geq k$, let $R^+$
be a subsequence of $S$ consisting of one occurrence
of each symbol of $\Sigma$ (notice that it is always
possible to define such a solution).
Then $R^+$ is a solution of \LRS of sufficient size.

Therefore, we can assume that $S$ is defined
over an alphabet $|\Sigma| < k$ and that 
each symbol has less then $k$ occurrences (otherwise
there exists an $a$-run of length at least $k$ for some $a \in \Sigma$). Hence \LRS has a kernel of size $k^2$.
\end{proof}

Schrinner et al. prove that \LRS is in FPT for parameter $|\Sigma|$, using exponential space~\cite{DBLP:conf/wabi/SchrinnerGWSSK20}. 
Due to a folklore result~\cite{CFK+15}, %
this also implies that there is a kernel for this parameter.
We will prove that there is no polynomial kernel for this parameter in Section~\ref{sec:kernelization}.

\section{An FPT Algorithm for Parameter Number of Runs}
\label{sec:FPT}

In this section, we consider \LRS{} when
parameterized by the number of different runs,
denoted by $r$, in the solution, that is 
whether there exists
a solution of \LRS{} consisting of exactly $r$ runs
such that it has length at least $k$.
We present a randomized fixed-parameter
algorithm for \LRS 
based on multilinear monomial detection.

The algorithm we present is for a variant of 
\LRS that asks for a run subsequence $R$ of $S$ such that
(1) $|R| = k$ and (2) $R$ contains exactly $r$ runs.
In order to solve the general problem where
we only ask for a solution of length at least $k$,
we need to apply the algorithm
for each $k$, with $r \leq k \leq |S|$. 

Now, we describe the circuit on which
our algorithm is based on.
The set of variables is:
\[
\{ x_a : a \in \Sigma \}
\]
Essentially, $x_a$ represents the fact
that we take an $a$-run (not necessarily maximal) in a substring of $S$. %

Define a circuit $\mathcal{C}$ as follows.
It has a root $P$ and a set
of intermediate vertices $P_{i,l,h}$,
with $1 \leq i \leq |S|$,
$1 \leq l \leq r$ and $1 \leq h \leq k$.
The multilinear monomials of $P_{i,l,h}$ informally encode a run subsequence
of $S[1,i]$ having length $h$ and consisting 
of $l$ runs.
$P_{i,l,h}$ is recursively defined as follows:

\begin{equation}
P_{i,l,h} = \left\{
\begin{array}{ll}
P_{i-1,l,h} + 
\sum_{j: 1 \leq j \leq i-1} P_{j,l-1,h-z} x_a    & \text{if } i \geq 1, l \geq 1,  h\geq 1, 
\\ 
& 1 \leq z \leq i-j-1,  \text{$S[i]=a$, $a \in \Sigma$,}\\
& 
\text{and } S[j+1,i] \text{ contains an $a$-run}\\
& 
\text{of length $z$}
\\
1 & \text{if } i \geq 0, l= h = 0, \\
0 & \text{if } i = 0, l > 0 \text{ or } h > 0.
\end{array} \right.
\label{Eq:recurrence}
\end{equation}

Then, define $P = P_{|S|,r , k} $.

Next, we show that we can consider the circuit $\mathcal{C}$ to compute a run subsequence of $S$. %

\begin{lemma}
\label{lemma:colorcoding}
There exists a run subsequence of $S$ of length $h$ consisting of $l$ runs over symbols $a_1, \dots a_l$ if and only if there exists a multilinear monomial in  $\mathcal{C}$ consisting of $l$ monomials $x_{a_1}, \dots , x_{a_l}$. 
\end{lemma}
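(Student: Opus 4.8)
The plan is to prove, by induction on the prefix length $i$, the following more general statement: for all $0 \leq i \leq |S|$, all $l,h \geq 0$ and all pairwise distinct symbols $a_1,\dots,a_l \in \Sigma$, the node $P_{i,l,h}$ contains the multilinear monomial $x_{a_1}\cdots x_{a_l}$ if and only if $S[1,i]$ admits a run subsequence of length $h$ whose $l$ runs are exactly the $a_t$-runs for $1 \leq t \leq l$. The lemma is then the special case $i=|S|$, reading the monomial off the root $P=P_{|S|,r,k}$. A preliminary observation that makes the whole argument purely combinatorial is that the recurrence~(\ref{Eq:recurrence}) uses only additions, multiplications and the constants $0$ and $1$; hence every $P_{i,l,h}$ is a polynomial with nonnegative integer coefficients, no monomial can be cancelled, and ``$P_{i,l,h}$ contains a monomial $M$'' is equivalent to ``$M$ is produced by at least one sequence of choices in the recurrence.''

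For the base cases I would invoke the last two lines of the recurrence: $P_{i,0,0}=1$ is the empty product, matching the empty run subsequence of length $0$ with no run, while $P_{0,l,h}=0$ for $(l,h)\neq(0,0)$ matches the fact that the empty string has no nonempty run subsequence.

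For the inductive step I would treat the two summands of the recurrence separately, and the crucial point throughout is that \emph{multilinearity corresponds exactly to the ``at most one run per symbol'' constraint}: multiplying a monomial by $x_a$ keeps it multilinear precisely when the symbol $a=S[i]$ is not already used, which is exactly the situation in which appending a fresh $a$-run keeps the subsequence a valid run subsequence. In the forward direction, a multilinear monomial arising from $P_{i-1,l,h}$ yields, by the induction hypothesis, a run subsequence of $S[1,i-1]$ that does not use position $i$, hence a run subsequence of $S[1,i]$ with the same parameters; a multilinear monomial arising from a term $P_{j,l-1,h-z}\,x_a$ forces $x_a$ not to divide the factor taken from $P_{j,l-1,h-z}$, so the run subsequence that the induction hypothesis attaches to that factor (living in $S[1,j]$) can be extended by an $a$-run of length $z$ taken from $S[j+1,i]$, producing a valid run subsequence of length $(h-z)+z=h$ with $l$ runs over distinct symbols. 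In the backward direction, given a run subsequence $R$ of $S[1,i]$ with the stated parameters, either $R$ avoids position $i$, so the induction hypothesis for $P_{i-1,l,h}$ applies directly, or $R$ uses position $i$; in the latter case, since $i$ is the largest index of $S[1,i]$ and $S[i]=a$, the last character of $R$ is $a$ and so its final (highest-positioned) run is the unique $a$-run of $R$, of some length $z$. I would then set $j$ to be one less than the smallest position of $S$ occupied by this final $a$-run, so that the first $l-1$ runs of $R$ form a run subsequence of $S[1,j]$ of length $h-z$ while the final $a$-run lies entirely in $S[j+1,i]$; applying the induction hypothesis to the first $l-1$ runs and multiplying by $x_a$ produces the desired monomial inside the term $P_{j,l-1,h-z}\,x_a$ of $P_{i,l,h}$.

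The main obstacle I anticipate is the bookkeeping of this backward decomposition: one must argue that using position $i$ forces the trailing run of $R$ to be precisely the unique $a$-run, and locate the split index $j$ so that the first $l-1$ runs and the trailing $a$-run land in disjoint, correctly ordered prefixes $S[1,j]$ and $S[j+1,i]$. Combined with the nonnegativity remark, which rules out cancellations, and with the multilinearity/one-run-per-symbol equivalence, this decomposition is exactly what turns the syntactic recurrence into the claimed combinatorial statement; the remaining checks (that the lengths add up to $h$, the run counts add up to $l$, and the symbols stay pairwise distinct) are routine.
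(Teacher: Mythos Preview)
Your proposal is correct and follows essentially the same induction on the prefix length $i$ as the paper, with the same case split according to whether position $i$ is used by the run subsequence (equivalently, whether the monomial comes from the summand $P_{i-1,l,h}$ or from a term $P_{j,l-1,h-z}x_a$). Your explicit remark that all coefficients are nonnegative so no cancellation can occur, and your precise choice of the split index $j$ in the backward direction, are welcome clarifications that the paper leaves implicit, but the overall argument is the same.
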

\begin{proof}
We will prove that there is a run subsequence of $S$ of length $k$ consisting of $l$ runs over symbols $a_1, \dots a_l$ if and only if there exists a multilinear monomial in $\mathcal{C}$ of degree $l$, consisting of $l$ distinct variables $x_{a_1}, \dots , x_{a_l}$. 
In order to prove this result, we prove by induction on $i$ , $1 \leq i \leq |S|$, that there exists a run subsequence $R$ of $S[1 . . . i]$, such that $|R| = h$ and $R$ contains $l$ runs, an $a_z$-run for each
$a_z \in \Sigma$, $1 \leq z \leq l$,
if and only if there exists a multilinear monomial 
$x_{a_1} \dots   x_{a_l}$ in $P_{i, l,h }$.

\smallskip

We start with the case $i = 1$. 
Assume that there is a run subsequence consisting of a single run of length $1$ (say an $a_1$-run). 
It follows that $S[1] = a_1$ and, by Equation \ref{Eq:recurrence}, 
$P_{1,1,1} = P_{0,0,0} \cdot x_{a_1} =  x_{a_1} $.
Conversely, if $P_{1,1,1} = P_{0,0,0} \cdot x_{a_1} = x_{a_1} $, then by construction $S[1]  = a_1$, which is a run of length $1$.

\smallskip

Assume that the lemma holds for $j < i$, we prove that it holds for $i $. 

$(\Rightarrow)$ Assume that there exists a run subsequence $R$ of $S[1,i]$ that consists of $l$ runs and that has length $h$. 
Let the $l$ runs in $R$ be over symbols $a_1, \ldots , a_l$ and assume that the rightmost run in $R$ is an $a_l$-run.
If $S[i]$ does not belong to the $a_l$-run in $R$, then $R$ is a run subsequence in $S[1,i-1]$ and by induction hypothesis $P_{i-1,l,h}$ contains a multilinear monomial of 
degree $l$ over variables $x_{a_1} \dots   x_{a_l}$.
If $S[i]$ belongs to the $a_l$-run in $R$, then consider the $a_l$-run in $R$ and assume that it belongs to 
substring $S[j+1,i]$ of $S$, 
with $1  \leq j+1 \leq i$, and that it has length $z$.
Consider the run subsequence $R'$ of $S$ obtained from $R$ by removing the $a_l$-run.
Then, $R'$ is a run subsequence of $S[1,j]$ that does not contain $a_l$ (hence it contains $l-1$ runs) and has length $h-z$.
By induction hypothesis, $P_{j,l-1,h-z}$ contains a multilinear monomial of length $l-1$ over variables $x_{a_1} \dots   x_{a_{l-1}}$.
Hence by the first case of Equation \ref{Eq:recurrence}, it follows that $P_{i,l,h}$ contains a multilinear monomial of length $l$ over variables $x_{a_1} \dots   x_{a_l}$.

$(\Leftarrow)$ Assume that $P_{i,l,h}$ contains a multilinear monomial of length $l$ over variables $x_{a_1} \dots   x_{a_l}$, we will prove that there is a run 
subsequence of $S$ of length $k$ consisting of $l$ runs. 
By Equation \ref{Eq:recurrence}, it follows that (1) $P_{i-1,l,h}$ contains a multilinear monomial of length $l$ over variables $x_{a_1} \dots   x_{a_l}$ or (2) $P_{j,l-1,h-z}$, for some $1 \leq j \leq i-1$, contains a multilinear monomial  of length $l-1$ that does not contain one of  $x_{a_1} \dots   x_{a_l}$ (without loss
of generality $x_{a_l}$) and  $S[j+1,i]$ contains an $a_l$-run of length $z$.

In case (1), by induction hypothesis
there exists a run subsequence in $S[1,i-1]$
(hence also in $S[1,i]$) of length $h$ consisting of $l$ runs over symbols $a_1, \dots , a_l$.

In case (2), by induction hypothesis there exists a run 
subsequence $R'$ of $S[1,j]$ of length $h-z$ consisting
of $l-1$ runs over symbols $a_1, \dots, a_{l-1}$.
Now, by concatenating $R'$ with the $a_l$-run of
length $z$ in $S[j+1,i]$, we obtain a
run subsequence of $R$ of $S[1,i]$ consisting
of $l$ runs and having length $h$.
\end{proof}

\begin{theorem}
\label{teo:LRSTeo}
\LRS can be solved by a randomized algorithm in 
$O(2^r r |S|^3)$ time  and polynomial space. 
\end{theorem}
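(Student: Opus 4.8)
The plan is to combine Lemma~\ref{lemma:colorcoding} with the randomized $k$-MLD algorithm of Theorem~\ref{thm:kmld}, after carefully bounding the size of the circuit $\mathcal{C}$ and the cost of its construction. First I would observe that by Lemma~\ref{lemma:colorcoding}, deciding whether $S$ has a run subsequence of length exactly $k$ consisting of exactly $r$ runs is equivalent to deciding whether the polynomial computed at the root $P = P_{|S|,r,k}$ contains a multilinear monomial of degree exactly $r$. Since the construction uses one variable $x_a$ per symbol and each multiplication by $x_a$ corresponds to appending one run, every monomial of $P$ has degree equal to the number of runs it encodes; a multilinear monomial of degree $r$ is therefore exactly a selection of $r$ distinct symbols forming a valid run subsequence of the prescribed length. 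This reduces the decision version (fixed $k$, fixed $r$) to a single instance of $r$-MLD, which by Theorem~\ref{thm:kmld} is solvable in $O(2^r |\mathcal{C}|)$ time and $O(|\mathcal{C}|)$ space.

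The key quantitative step is bounding $|\mathcal{C}|$ and the preprocessing. There are $|S| \cdot r \cdot k$ intermediate vertices $P_{i,l,h}$, plus the root and the leaf variables, so the number of gates is $O(|S| \cdot r \cdot k)$. However, the real cost lies in the edges of the DAG: the recurrence for $P_{i,l,h}$ sums over all $j$ with $1 \le j \le i-1$ and, for each such $j$, over the admissible run length $z$ with $S[j+1,i]$ containing an $a$-run of length $z$, where $a = S[i]$. For a fixed endpoint $i$ with $S[i]=a$, the pair $(j,z)$ is determined by how many occurrences of $a$ lie in $S[j+1,i]$, so the number of genuinely distinct $(j,z)$ contributions is $O(|S|)$ per vertex. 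This gives $O(|S|)$ incoming summands per $P_{i,l,h}$ and hence $|\mathcal{C}| = O(|S|^2 \cdot r \cdot k)$ arithmetic gates in total, each constructible in polynomial time by a single left-to-right scan that precomputes, for every position $i$ and every earlier split point $j$, the length of the $a$-run of $S[j+1,i]$.

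Plugging this size bound into Theorem~\ref{thm:kmld}, the decision version for one fixed value of $k$ runs in time $O(2^r |\mathcal{C}|) = O(2^r |S|^2 r k)$ and polynomial space. To solve the actual \LRS problem, which asks for a run subsequence of length \emph{at least} $k$ with exactly $r$ runs, I would run this decision procedure for each target length in the range $r \le k' \le |S|$, as already noted in the setup of the section; the largest $k'$ for which a multilinear term exists yields the answer. Since $k \le |S|$, this outer loop multiplies the running time by a factor $O(|S|)$, giving a total of $O(2^r r |S|^3)$ time while keeping the space polynomial, because the $k$-MLD algorithm of Theorem~\ref{thm:kmld} uses only $O(|\mathcal{C}|)$ space and we reuse it across iterations.

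The main obstacle I expect is the careful accounting of the circuit size, specifically justifying that the number of distinct $(j,z)$ summands per vertex is $O(|S|)$ rather than $O(|S|^2)$: one must argue that for a fixed right endpoint $i$ and symbol $a=S[i]$, each choice of split point $j$ forces a unique run length $z$ (namely the number of $a$'s appearing consecutively up to $i$ that can be consolidated into a single $a$-run ending at $i$), so that the summation does not blow up the degree or introduce spurious non-multilinear terms. Once this bookkeeping is pinned down and one confirms that the $O(|\mathcal{C}|)$-space guarantee of Theorem~\ref{thm:kmld} survives the $O(|S|)$ outer iterations, the stated time and space bounds follow directly.
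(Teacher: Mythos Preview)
Your proof follows the paper's line exactly: correctness via Lemma~\ref{lemma:colorcoding}, a bound on $|\mathcal{C}|$, application of Theorem~\ref{thm:kmld}, and an outer loop over the target length $k$. The paper is much terser than you: it simply takes $|\mathcal{C}| = |S|\cdot k\cdot r \le |S|^2 r$ (i.e., the number of intermediate nodes $P_{i,l,h}$), obtains $O(2^r r|S|^2)$ per call, and multiplies by $|S|$ for the outer loop over~$k$.

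Your more detailed edge-level accounting, however, creates an internal inconsistency. You argue that each $P_{i,l,h}$ has $O(|S|)$ incoming summands and hence $|\mathcal{C}| = O(|S|^2 r k)$, which is one $|S|$-factor larger than the paper's bound; with that size a single call costs $O(2^r |S|^2 r k)$, and iterating over all $k\in[r,|S|]$ yields $O(2^r r |S|^3 k) = O(2^r r|S|^4)$, not $O(2^r r|S|^3)$ --- you silently drop a factor of $k$ in the last line. Moreover, the claim that ``each choice of split point $j$ forces a unique run length $z$'' does not match the recurrence as written: for a fixed $j$, every $z$ between $1$ and the number of occurrences of $a$ in $S[j+1,i]$ is admissible, so the fan-in of a single node can exceed $O(|S|)$. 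To recover the stated bound you should either take the vertex count as $|\mathcal{C}|$ (as the paper does) or first prune the recurrence to a single canonical $z$ per $j$ (or a single canonical $j$ per $z$) and argue that this preserves the set of multilinear monomials.
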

\begin{proof}
The correctness of the randomized algorithm
follows by Lemma~\ref{lemma:colorcoding}.

We compute $P$ in polynomial time and we decide if $P_{|S|,r,k}$ contains a multilinear monomial of degree $r$ in $O(2^r r |S|^2)$  time 
and polynomial space. 
The result follows from Lemma \ref{lemma:colorcoding}, Theorem \ref{thm:kmld},
and from the observation that
$|\mathcal{C}| = |S|\cdot l\cdot r$, with $l \leq |S|$.
Finally, we have to iterate the algorithm
for each $k$, with $r \leq k \leq |S|$,
thus the overall time complexity is $O(2^r r |S|^3)$.
\end{proof}

\section{Hardness of Kernelization}
\label{sec:kernelization}

As discussed in Section~\ref{sec:definition}, \LRS has a trivial polynomial kernel for parameter $k$ and its FPT status implies an (exponential) kernel for parameters $|\Sigma|$ and $r$. 
In the following, we will prove that it is unlikely that \LRS admits a polynomial kernel for parameter $|\Sigma|$ and parameter $r$. 

\begin{theorem}
\label{th:nokernel}
\LRS does not admit a polynomial kernel for parameter $|\Sigma|$, unless NP $\subseteq$ coNP/poly.
\end{theorem}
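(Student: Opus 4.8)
The plan is to use the cross-composition framework (Definition~\ref{def:CrossComp}) together with Theorem~\ref{teo:CrossComp}. Since \LRS is NP-hard, it suffices to cross-compose \LRS into itself (with parameter $|\Sigma|$): given $t$ instances of \LRS, I want to build a single instance whose alphabet size is bounded by a polynomial in $\max_i |x_i| + \log t$, and which is a yes-instance if and only if at least one input instance is. The main design decision is how to ``select'' which of the $t$ instances we are testing while keeping the alphabet small, since a naive concatenation would force the alphabet to grow linearly in $t$.

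\begin{proof}[Proof sketch]
First I would fix a polynomial equivalence relation $\mathcal{R}$ that groups instances $(S_i,k_i)$ of \LRS by the value of $k_i$ and by the alphabet size $|\Sigma_i|$; standard padding lets me assume all $t$ instances in a class share the same target $k$ and are defined over a common alphabet $\Sigma$ of size $s$, with $s$ and $k$ bounded by $\max_i|x_i|$. The key step is to encode the choice of instance in binary rather than unary: I would introduce roughly $O(\log t)$ fresh ``selector'' symbols and build a gadget string that, between consecutive copies of the instance strings $S_1, S_2, \ldots, S_t$, forces any long run subsequence to commit to exactly one index $i$. Concretely, the construction concatenates the $S_i$ (suitably separated) into one string $S^*$, and prepends/appends selector blocks so that collecting a full-value solution from block $i$ is incompatible (at the run-subsequence level, i.e.\ it would create two runs of some selector symbol) with collecting value from any other block $j\neq i$. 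Setting the target length $k^*$ so that it can only be met by taking a value-$k$ solution inside a single $S_i$ plus the fixed contribution of the selector gadget then gives the equivalence $(S^*,k^*)\in\LRS \iff \exists i:(S_i,k)\in\LRS$.

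The crucial property to verify is condition (ii) of Definition~\ref{def:CrossComp}: the new alphabet consists of the $s$ original symbols plus the $O(\log t)$ selector symbols (and a constant number of separators), so $|\Sigma^*| = s + O(\log t)$, which is polynomially bounded in $\max_i|x_i| + \log t$, exactly as required. Condition (i), correctness of the equivalence, reduces to arguing that (a) any yes-instance $S_i$ yields a solution of $S^*$ of length $k^*$ by combining the in-block solution with one consistent choice of selector runs, and (b) conversely, any run subsequence of $S^*$ of length $k^*$ is forced to draw its non-selector content from a single block $S_i$, because mixing two blocks would require a second run of some symbol that already appears in the selector gadget, contradicting the at-most-one-run constraint.

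The hard part will be engineering the selector gadget so that it simultaneously (i) admits exactly one ``profitable'' configuration per index $i$, (ii) makes any cross-block combination strictly suboptimal at the chosen target length $k^*$, and (iii) reuses its $O(\log t)$ symbols across all blocks so the alphabet stays logarithmic in $t$ rather than linear. Managing the interaction between the selector runs and the symbols of $\Sigma$ already present inside the $S_i$ is the delicate point: I would likely keep the selector alphabet disjoint from $\Sigma$ and tune the per-block length contributions (using the $a^p$ padding notation) so the arithmetic of $k^*$ leaves no slack for combining fragments of two distinct instances. Once the gadget is fixed, verifying both directions of condition (i) is routine, and the corollary for parameter $r$ (Corollary~\ref{cor:nokernel}) should follow immediately since $r \leq |\Sigma|$ and the same construction bounds $r^*$ polynomially.
\end{proof}
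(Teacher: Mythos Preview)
Your overall framework---OR-cross-composing \LRS into itself after normalising the $t$ instances to a common alphabet of size $m$---is exactly what the paper does. Where you diverge is in the selector mechanism: you propose $O(\log t)$ fresh symbols with a binary index encoding, and you explicitly flag that engineering this gadget is ``the hard part'' and leave it unbuilt. That is a real gap, because nothing in the sketch shows such a gadget can be made to work for run subsequences; and more importantly, it is unnecessary.

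The paper's insight is that \emph{two} extra symbols $\$$ and $\#$ suffice, so $|\Sigma'| = m+2$ is independent of $t$ altogether (which trivially meets condition~(ii)). The composed string is simply
\[
S' \;=\; \$^{2n}\,\sigma(S_1)\,\#^{2n}\,\$^{2n}\,\sigma(S_2)\,\#^{2n}\cdots\$^{2n}\,\sigma(S_t)\,\#^{2n},
\qquad k' = k + (t{+}1)\cdot 2n.
\]
Because every separator block has length $2n > n = |S_i|$, a simple exchange argument forces any run subsequence of length $\geq k'$ into the shape ``all $\$$'s up to index $i$, then a run subsequence of $\sigma(S_i)$, then all $\#$'s from index $i$ on''; so $S'$ reaches $k'$ iff some $S_i$ reaches $k$. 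Your premise that ``a naive concatenation would force the alphabet to grow linearly in $t$'' is the misstep: reusing the \emph{same} two separator symbols across all blocks is fine precisely because the run-subsequence constraint allows only one $\$$-run and one $\#$-run, and the block lengths make any cross-block mixing strictly suboptimal. The binary-selector route might be salvageable, but it buys nothing and replaces a two-line construction with a genuinely delicate gadget you have not yet produced.
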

\begin{proof}

\sloppy
We will define an OR-cross-composition 
(see Definition \ref{def:CrossComp})
from the 
\LRS problem itself, whose unparameterized version is NP-Complete~\cite{DBLP:conf/wabi/SchrinnerGWSSK20}.

Consider $t$ instances $(S_1,\Sigma_1,k_1), (S_2,\Sigma_2,k_2), \ldots,  (S_t,\Sigma_t,k_t)$ of \LRS, where, for each $i$ with $1 \leq i \leq t$, $S_i$ is the input string built over the alphabet $\Sigma_i$, and $k_i \in \mathbb{N}$ is the length 
of the solution, respectively. 
We will define an equivalence relation $\mathcal{R}$ such
that strings that are not encoding valid instances 
are equivalent, and two valid instances 
$(S_i,\Sigma_i,k_i),(S_j,\Sigma_j,k_j)$ are equivalent if and only if $|S_i|=|S_j|$, $|\Sigma_i|=|\Sigma_j|$, and $k_i=k_j$. 
We now assume that $|S_i| = n$, $|\Sigma_i|=m$ 
and $k_i=k$ for all $1 \leq i \leq t$. 

We will build an instance of \LRS $(S',k', \Sigma')$ where $S'$ is a string built over the alphabet $\Sigma'$ and $k'$ an integer such that there is a solution of size at least $k'$ for $S'$ iff there is an $i$, $1 \leq i \leq t$ such that there is a solution of size at least $k$ in $S_i$. %

We first show how to redefine the input strings 
$S_1$, $S_2$, $\dots$, $S_t$,
such that they are all over the same alphabet. Notice that this will not be an issue, since we will construct a string $S'$ such that a solution of \LRS is not spanning over two different input strings.
For all instances $(S_i,\Sigma_i,k_i)$, 
$1 \leq i \leq t$,
we consider any ordering of the symbols in $\Sigma_i$ and we define a string $\sigma(S_i)$ starting from $S_i$,
by replacing the $j$-th, $1 \leq j \leq m$, symbol of $\Sigma_i$ by $j$, that is its position in the ordering of $\Sigma_i$. 
That way, it is clear that all strings $\sigma(S_i)$, 
$1 \leq i \leq t$ , are built over the same alphabet $\{1,2,\ldots,m\}$.

Now, the instance $(S',k', \Sigma')$ of \LRS is build as follows.
First, $\Sigma'$ is defined as follows:
\[
\Sigma' = \{1,2,\ldots,m\} \cup \{\#,\$\}
\]
where $\#$ and $\$ $ are two symbols not in $\Sigma$.

The string $S'$ is defined as follows:
\[
S' = \$^{2n} \sigma(S_1) \#^{2n} \$^{2n} \sigma(S_2) \#^{2n}, ..., \$^{2n}\sigma(S_t) \#^{2n}
\]
where $\$^{2n}$ ($\#^{2n}$, respectively) is 
a string consisting of the repetition $2n$ times of the symbol $\$$ ($\#$, respectively).

Finally, 
\[
k' = k + (t+1)(2n).
\]

Since we are applying OR-cross-composition for parameter
$|\Sigma|$, we need to show that property (ii) of
Definition \ref{def:CrossComp} holds.
By construction, we see that $|\Sigma'| = m+2$, which is independent of $n$ and $t$
and it is bounded by the size
of the largest input instance.

We now show that $S'$ contains a run subsequence of size at least $k'$ if and only if there exists at least one 
string $S_i$, $1 \leq i \leq t$, that
contains a run subsequence of size at least $k$.

$(\Leftarrow)$ First, assume that some $S_i$, with $1 \leq i \leq t$, 
contains a run subsequence $R_i$ of length at least $k$.
Then, define the following run subsequence $R'$ of $S'$, obtained by concatenating these
substrings of $S'$:
\begin{itemize}

\item The concatenation of the leftmost $i$-th 
substrings $\$^{2n}$ of $S'$,

\item The substring $\sigma(R_i)$ of $S_i$,

\item 
The concatenation of the rightmost $(t-i+1)$-th 
substrings $\#^{2n}$ of $S'$.

\end{itemize}

It follows that $S'$ contains a run subsequence of length at least $i \times (2n) + k + (t-i+1) \times (2n) = k'$.

$(\Rightarrow)$ Conversely, assume now that $S'$ contains a run subsequence $R'$ of length 
at least $k'$.
First, we prove %
that $R'$ contains exactly one $\$$-run
and one $\#$-run. Indeed, if it is not the case, we
can add the leftmost (the rightmost, respectively)
substring $\$^{2n}$ ($\#^{2n}$, respectively) as a run of $R'$.

Consider a run $r$ in $R'$,
which is either the $\$$-run or the $\#$-run of $R'$.
Assume that $R'$ contains a substring 
\[
r R'(S_i) R'(S_j)
\]
or a substring
\[
R'(S_i) R'(S_j)\ r
\]
with $1 \leq i < j \leq t$,
where $R'(S_i)$ ($R'(S_j)$, respectively) 
is a substring of $\sigma(S_i)$ 
(of $\sigma(S_j)$, respectively).
We consider without loss of generality 
the case that $r R'(S_i) R'(S_j)$ is a
substring of $R'$.
Then, we can modify $R'$, increasing
its length, as follows: 
we remove $R'(S_i)$ and extend
the run $r$ with a string $\$^{2n}$ or a string $\#^{2n}$ (depending on the fact
that $r$ is a $\$$-run or a $\#$-run, respectively)
that is between $\sigma(S_i)$ and $\sigma(S_j)$. The size
of $R'$ is increased, since $|R'(S_i)| \leq n$.

Now, assume that $R'$ contains a substring 
\[
r = \#^{2n}\ R'(S_i)\ \$^{2n} 
\]
where $R'(S_i)$ is a substring of $\sigma(S_i)$.
We can replace $r$ with the substring $\#^{2n} \#^{2n} \$^{2n} $, where $\#^{2n}$ is the substring
between $\sigma(S_i)$ and $\sigma(S_{i+1})$
in $S'$.
Again, the size of $R'$ is increased, since $|R'(S_i)| \leq n$.

By iterating these modifications on $R'$, we obtain
that $R'$ is one of the following string: 

\begin{enumerate}

\item A prefix $\$^{j(2n)}$ concatenated with 
a substring $R'(S_j)$ of $\sigma(S_j)$, 
for some $1 \leq  j \leq t$, concatenated with 
a suffix  $\#^{2n(t-j+1)}$

\item  A substring $R'(S_1)$ of $\sigma(S_1)$
concatenated with a string of $\#^{2nj}$ concatenated with %
a string of  $\$^{2n(t-j)}$ concatenated 
with a substring  $R'(S_t)$ of $\sigma(S_t)$.

\end{enumerate}

Notice that in this second case, it holds that
$|R'| = (t \times 2n) + |R'(S_1)| + |R'(S_t)| < k'$, since 
$|R'(S_1)| + |R'(S_t)| \leq 2 n$, and we can assume that $k \geq 1$.
Hence $R'$ must be a string described at point 1. It follows 
that 
\[
|R'| = 2n (t+1) +|R'(S_j)|
\]

Since $|R'| = 2n (t+1)+|R'(S_j)|$, then $R'(S_j)$
has length at least $k$.

We have described an OR-cross-composition of \LRS to itself. 
By Theorem \ref{teo:CrossComp}, it follows
that \LRS does not admit a polynomial kernel for parameters $|\Sigma|$, unless NP $\subseteq$ coNP/poly.
\end{proof}

We can complement the FPT algorithm of Section \ref{sec:FPT}, with a hardness of kernelization
for the same parameter.

\begin{corollary}
\label{cor:nokernel}
\LRS does not admit a polynomial kernel for parameters $r$, unless NP $\subseteq$ coNP/poly.
\end{corollary}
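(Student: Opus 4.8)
The plan is to observe that the OR-cross-composition built in the proof of Theorem~\ref{th:nokernel} already witnesses a kernelization lower bound for the parameter $r$, so the same construction can be reused almost verbatim. Recall that in that reduction the composed instance $(S',k',\Sigma')$ uses the alphabet $\Sigma' = \{1,\ldots,m\}\cup\{\#,\$\}$, so $|\Sigma'| = m+2$. Since any run subsequence of $S'$ contains at most one run per symbol, every feasible solution of $S'$ has at most $|\Sigma'| = m+2$ runs; in particular, the canonical solution exhibited in the $(\Leftarrow)$ direction (one $\$$-run, a run subsequence of $\sigma(S_j)$, and one $\#$-run) uses at most $m+2$ runs, since $\sigma(S_j)$ is written over $\{1,\ldots,m\}$ and thus contributes at most $m$ runs.

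First I would set the output parameter of the cross-composition to $r^* = m+2$. Property~(i) of Definition~\ref{def:CrossComp} is exactly the equivalence already established in Theorem~\ref{th:nokernel}: $S'$ admits a run subsequence of length at least $k'$ (which, as noted, necessarily uses at most $m+2$ runs) if and only if some input string $S_i$ admits a run subsequence of length at least $k$. For property~(ii) I would note that $r^* = m+2$ is independent of $t$ and bounded by the size of the largest input instance, since $m = |\Sigma_i| \leq |S_i|$; hence $r^*$ is polynomial in $\max_i|x_i|+\log t$, as required. Applying Theorem~\ref{teo:CrossComp} then yields the claimed lower bound.

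Equivalently, one can derive the corollary directly from Theorem~\ref{th:nokernel} through the parameter ordering $r \leq |\Sigma|$ recorded in Section~\ref{sec:definition}: a polynomial kernel for parameter $r$ produces, on every instance, an equivalent instance whose size is bounded by a polynomial in $r \leq |\Sigma|$, and would therefore also be a polynomial kernel for parameter $|\Sigma|$, contradicting Theorem~\ref{th:nokernel}. This is precisely the downward propagation of negative results recorded in Table~\ref{tab:my_label}.

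Since both routes merely repackage an argument that is already in place, I do not expect a genuine obstacle. The only point requiring a moment of care is confirming that the number of runs in the composed instance is controlled: it is bounded by $|\Sigma'| = m+2$, so it is small for the very same reason that $|\Sigma'|$ is small, which is exactly what made the composition work for parameter $|\Sigma|$ in the first place.
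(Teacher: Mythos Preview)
Your proposal is correct, and your second route is exactly the paper's proof: the corollary is derived in a single line from Theorem~\ref{th:nokernel} together with the inequality $r \leq |\Sigma|$. The first route you sketch (re-running the cross-composition with output parameter $r^* = m+2$) is a valid elaboration but not needed, since the parameter-ordering argument already suffices.
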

\begin{proof}
The result follows from Theorem \ref{th:nokernel}
and from the fact that $r \leq |\Sigma|$.
\end{proof}

\section{APX-hardness for Bounded Number of Occurrences}
\label{sec:APX2occ}

In this section, we show that \LRS
is hard even when the number of occurrences
of a symbol in the input string is bounded by two. 
We denote this restriction 
of the problem by \TwoLRS.
Notice that if the number of occurrences
of a symbol is bounded by one, then
the problem is trivial, 
as a solution of \LRS can have only runs of length one.

We prove
the result by giving a reduction from the
{\sc Maximum Independent Set problem on Cubic Graphs}
(\MISC), which is known to be APX-hard \cite{DBLP:journals/tcs/AlimontiK00}.
We recall the definition of \MISC:

\Pb{{\sc Maximum Independent Set problem on Cubic Graphs}
(\MISC)}{A cubic graph $G=(V,E)$.}{an independent set in $G$ of size at least $q$}{Output}

Given a cubic graph $G=(V,E)$, with $V=\{ v_1, \dots v_n \}$
and  $|E|=m$, we construct a corresponding instance $S$ of \TwoLRS (see Fig. \ref{fig:lreduction} for an example of our construction).
First, we define the alphabet 
$\Sigma$:

\[
\Sigma = \{  w_i: 1 \leq i \leq n \}
\cup  
\{  x_{i,j}^i, x_{i,j}^j, e_{i,j}^1, e_{i,j}^2: \{ v_i,v_j \} \in E, i < j \} \cup
\{  \sharp_{i,z}: 1 \leq i \leq m+n, 1 \leq z \leq 3 \}
\]

This alphabet is of size $n+4m+3(m+n) = 4n+7m$.

Now, we define a set of substrings of the 
instance $S$ of \TwoLRS that we are constructing.

\begin{itemize}

\item For each $v_i \in V$, $1 \leq i \leq n$, such that
$v_i$ is adjacent to $v_j$, $v_h$, $v_z$,
$1 \leq  j < h < z \leq n$,
we define a substring $S(v_i)$:
\[
S(v_i) = w_i x_{i,j}^i x_{i,h}^i 
x_{i,z}^i  w_i
\]
Notice that in the definition of $S(v_i)$ given above, we have assumed without loss of generality that $1 \leq i <  j < h <z \leq n$.
If, for example, $1 \leq j < i <h <z \leq n$, 
the symbol associated with $\{v_i,v_j\}$
is then $x_{j,i}^i$ and $S(v_i)$ is defined as 
follows:
\[
S(v_i) = w_i x_{j,i}^i x_{i,h}^i 
x_{i,z}^i  w_i
\]

\item For each edge $\{v_i,v_j\} \in E$,
with $1 \leq i<j \leq n$, we
define a substring $S(e_{ij})$:
\[
S(e_{ij}) = e_{i,j}^1 x_{i,j}^i e_{i,j}^2
e_{i,j}^1 x_{i,j}^j e_{i,j}^2
\]
\item We define separation substrings
$S_{Sep,i}$, with $1 \leq i \leq m+n$:
\[
S_{Sep,i} = \sharp_{i,1} \sharp_{i,2} \sharp_{i,3}
\]
\end{itemize}

Now, given the lexical ordering\footnote{$\{ v_i,v_j\} < \{ v_h, v_z\}$ (assuming $i<j$ and $h<z$) if and only if 
$i<h$ or $i=h$ and $j <z$.} 
of the edges of $G$,
the input string $S$ is defined as follows
(we assume that $\{v_1,v_z\}$ is the first edge
and $\{v_p,v_t\}$ is the last edge in the
lexicographic ordering of $E$):
\[
S = S(v_1) S_{Sep,1} S(v_2) S_{Sep,2}
\dots S(v_n) S_{Sep,n} S(e_{1,z}) S_{Sep,n+1} \dots
S(e_{p,t}) S_{Sep,n+m} 
\]

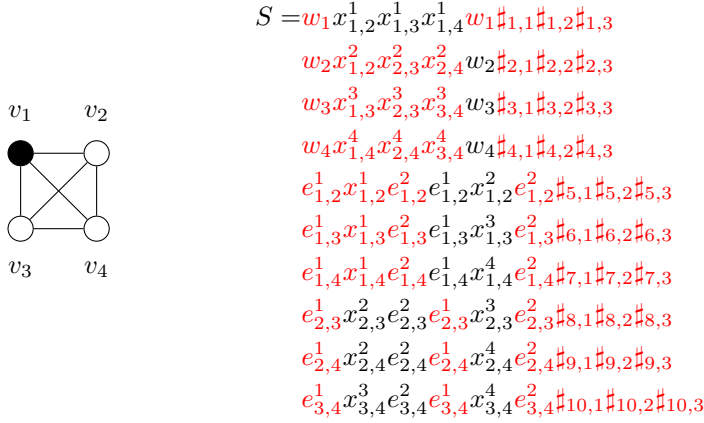
\begin{figure}
\centering
\begin{subfigure}{.2\textwidth}
\begin{tikzpicture}
\tikzstyle{every node}=[shape = circle,minimum size=0.5pt, draw]

\node[label=$v_1$,fill=black] (1) at (0,0) {}; 
\node[right of=1, label=$v_2$] (2) {}; 
\node[below of=1, label=below:$v_3$] (3) {}; 
\node[right of=3, label=below:$v_4$] (4) {}; 

\draw (1) -- (2);
\draw (1) -- (3);
\draw (1) -- (4);
\draw (2) -- (3);
\draw (2) -- (4);
\draw (3) -- (4);
\end{tikzpicture}
\end{subfigure}
\begin{subfigure}{.7\textwidth}
\begin{align*}
S = 
& {\color{red}w_1} x_{1,2}^1 x_{1,3}^1 x_{1,4}^1 {\color{red}w_1}
{\color{red}\sharp_{1,1}\sharp_{1,2}\sharp_{1,3}}\\
& {\color{red}w_2 x_{1,2}^2 x_{2,3}^2 x_{2,4}^2} w_2
{\color{red}\sharp_{2,1}\sharp_{2,2}\sharp_{2,3}}\\
& {\color{red}w_3 x_{1,3}^3 x_{2,3}^3 x_{3,4}^3} w_3
{\color{red}\sharp_{3,1}\sharp_{3,2}\sharp_{3,3}}\\
& {\color{red}w_4 x_{1,4}^4 x_{2,4}^4 x_{3,4}^4} w_4
{\color{red}\sharp_{4,1}\sharp_{4,2}\sharp_{4,3}}\\
& {\color{red}e_{1,2}^1 x_{1,2}^1 e_{1,2}^2} e_{1,2}^1 x_{1,2}^2 {\color{red}e_{1,2}^2}
{\color{red}\sharp_{5,1}\sharp_{5,2}\sharp_{5,3}}\\
& {\color{red}e_{1,3}^1 x_{1,3}^1 e_{1,3}^2} e_{1,3}^1 x_{1,3}^3 {\color{red}e_{1,3}^2}
{\color{red}\sharp_{6,1}\sharp_{6,2}\sharp_{6,3}}\\
& {\color{red}e_{1,4}^1 x_{1,4}^1 e_{1,4}^2} e_{1,4}^1 x_{1,4}^4 {\color{red}e_{1,4}^2}
{\color{red}\sharp_{7,1}\sharp_{7,2}\sharp_{7,3}}\\
& {\color{red}e_{2,3}^1} x_{2,3}^2 e_{2,3}^2 {\color{red}e_{2,3}^1} x_{2,3}^3 {\color{red}e_{2,3}^2}
{\color{red}\sharp_{8,1}\sharp_{8,2}\sharp_{8,3}}\\
& {\color{red}e_{2,4}^1} x_{2,4}^2 e_{2,4}^2 {\color{red}e_{2,4}^1} x_{2,4}^4 {\color{red}e_{2,4}^2}
{\color{red}\sharp_{9,1}\sharp_{9,2}\sharp_{9,3}}\\
& {\color{red}e_{3,4}^1} x_{3,4}^3 e_{3,4}^2 {\color{red}e_{3,4}^1} x_{3,4}^4 {\color{red}e_{3,4}^2}
{\color{red}\sharp_{10,1}\sharp_{10,2}\sharp_{10,3}}
\end{align*}
\end{subfigure}
\caption{A sample cubic graph (with 4 nodes and 6 edges) and the associated sequence. Black vertex of the graph corresponds to an independent set (of size 1 here), red symbols in the sequence correspond to the subsequence (of size $5\cdot 1 + 4 \cdot 3 + 3 \cdot 6 + 3 \cdot 10 = 65$).}\label{fig:lreduction}
\end{figure}

Now, we prove some properties on the string $S$.

\begin{lemma}
\label{lem:TwoOcc}
Let $G=(V,E)$ be an instance of \MISC
and let $S$ be the corresponding built instance
of \TwoLRS. Then $S$ contains at most two
occurrences for each symbol of $\Sigma$.
\end{lemma}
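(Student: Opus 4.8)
The plan is to prove the bound by a direct case analysis over the five syntactic types of symbols in $\Sigma$, counting in how many of the building blocks $S(v_i)$, $S(e_{ij})$, and $S_{Sep,i}$ each symbol occurs and how many times within each block. Since $S$ is merely the concatenation of these blocks, the total number of occurrences of any symbol in $S$ is the sum of its occurrences over all blocks, so it suffices to inspect each block type once.

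First I would dispose of the easy types. A symbol $w_i$ occurs only in the vertex block $S(v_i)$ (no other block mentions it), and there exactly twice, as the first and last symbol of $S(v_i)$; hence it has exactly two occurrences in $S$. A separation symbol $\sharp_{i,z}$ occurs only in the block $S_{Sep,i} = \sharp_{i,1}\sharp_{i,2}\sharp_{i,3}$, and there exactly once, for a total of one occurrence. The edge-delimiter symbols $e_{i,j}^1$ and $e_{i,j}^2$ occur only in the edge block $S(e_{ij}) = e_{i,j}^1 x_{i,j}^i e_{i,j}^2 e_{i,j}^1 x_{i,j}^j e_{i,j}^2$, where each occurs exactly twice, so each has exactly two occurrences in $S$.

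The crux is the family of symbols $x_{i,j}^i$ (and symmetrically $x_{i,j}^j$), the only type appearing in two different kinds of blocks. I would argue that $x_{i,j}^i$ occurs exactly once in the vertex block $S(v_i)$, exactly once in the edge block $S(e_{ij})$, and in no other block. The occurrence inside $S(e_{ij})$ is immediate from the definition above. For the vertex block I would invoke the fact that $G$ is cubic: $v_i$ has exactly three neighbours, so $S(v_i)$ lists exactly three $x$-symbols, one per incident edge, each carrying the superscript $i$; in particular the edge $\{v_i,v_j\}$ contributes precisely the single symbol $x_{i,j}^i$ (written $x_{j,i}^i$ when $j<i$, as noted in the construction). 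Since a superscript $i$ appears only in $S(v_i)$, this is the sole vertex-block occurrence, and summing the two occurrences gives exactly two.

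The only place where the argument is not completely mechanical — and hence the main point to state carefully — is the index-ordering convention in the vertex gadget: the subscript of an $x$-symbol is always written with the smaller vertex index first, while its superscript records the endpoint whose gadget it belongs to. I would make explicit that this convention yields a one-to-one correspondence between the three $x$-occurrences in $S(v_i)$ and the three edges incident to $v_i$, and that the two endpoints of an edge $\{v_i,v_j\}$ produce the two \emph{distinct} symbols $x_{i,j}^i$ and $x_{i,j}^j$, matching exactly the two $x$-symbols inside $S(e_{ij})$. Once this correspondence is spelled out, the occurrence counts follow, and as every symbol type has been shown to occur at most twice, the lemma follows.
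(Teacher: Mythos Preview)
Your proposal is correct and follows essentially the same approach as the paper: a direct case analysis over the symbol types, noting in which blocks each type occurs and how often. Your write-up is in fact more detailed than the paper's (which merely lists the containing blocks without spelling out the index-ordering subtlety for the $x$-symbols), but the argument is the same.
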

\begin{proof}
Notice that each symbol $w_i$, $1 \leq i \leq n$,
appears only in substring $S(v_i)$ of $S$.
Symbols $e_{i,j}^1$, $e_{i,j}^2$, with $\{ v_i,v_j\} \in E$
and $1 \leq i <j \leq n$, appear only in substring $S(e_{i,j})$ of $S$.
Each symbol $\sharp_{i,z}$, with $1 \leq i \leq m+n$
and $1 \leq z \leq 3$, appears only in
substring $S_{Sep,i}$ of $S$.
Finally, each symbol $x_{i,j}^i$,
with $\{ v_i,v_j\} \in E$, %
appears once in exactly two substrings
of $S$, namely
$S(v_i)$ and $S(e_{i,j})$.
\end{proof}

Now, we prove a property of solutions
of \TwoLRS relative to separation substrings.

\begin{lemma}
\label{lem:Sep}
Let $G=(V,E)$ be an instance of \MISC
and let $S$ be the corresponding instance
of \TwoLRS.
Given a run subsequence $R$ of $S$,
if $R$ does not contain some separation substring $S_{Sep,i}$,
with $1 \leq i \leq m+n$,
then there exists a run subsequence $R'$
of $S$ that contains $S_{Sep,i}$ and such
that $|R'| > |R|$. 
\end{lemma}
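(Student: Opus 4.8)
The plan is to exploit the fact that, by Lemma~\ref{lem:TwoOcc} and the construction, each symbol $\sharp_{i,z}$ (for $1\leq z\leq 3$) occurs exactly once in $S$, namely inside $S_{Sep,i}$. Consequently, putting any of these three symbols into a run subsequence always yields a run of length $1$ and can never violate the ``at most one run per symbol'' condition, since none of them occurs elsewhere. Hence the only way that inserting the missing symbols of $S_{Sep,i}$ into $R$ could fail to increase the length is if this insertion splits some existing run of $R$ into two runs of the same symbol.

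First I would isolate this potential obstruction. Write $[p,p+2]$ for the block of positions of $S$ occupied by $S_{Sep,i}$. I say that a run of $R$ \emph{spans} $S_{Sep,i}$ if some symbol $a$ has one occurrence selected at a position $<p$ and another selected at a position $>p+2$, with no symbol of $R$ chosen strictly between them, so that the two occurrences form a single $a$-run of $R$. I would first observe that at most one symbol can span $S_{Sep,i}$: a spanning run forces the selected symbols immediately to the left and immediately to the right of the gap to be equal, and there is only one such left/right pair. Moreover, since by Lemma~\ref{lem:TwoOcc} every symbol occurs at most twice in $S$, a spanning run consists of exactly two occurrences of $a$. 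Note also that if $R$ already contains at least one $\sharp_{i,z}$, then a selected symbol lies in $[p,p+2]$ and no run can span $S_{Sep,i}$.

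I would then split into two cases. If no run spans $S_{Sep,i}$, define $R'$ by adding to $R$ exactly the symbols of $S_{Sep,i}$ that $R$ does not already contain; at least one such symbol is missing by hypothesis, so $|R'|\geq |R|+1$. This $R'$ is a valid run subsequence: the added symbols occur only here, so they introduce fresh length-$1$ runs, and since no run spans the gap their insertion sits between occurrences of two distinct symbols (or at a boundary) and breaks nothing. If instead some symbol $a$ spans $S_{Sep,i}$, then $R$ contains none of $\sharp_{i,1},\sharp_{i,2},\sharp_{i,3}$; I define $R'$ by removing one of the two occurrences of $a$ and adding all three symbols of $S_{Sep,i}$. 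The remaining occurrence of $a$ is now a length-$1$ run, the three $\sharp$ symbols are fresh length-$1$ runs, and no other run is affected, so $R'$ is a valid run subsequence with $|R'| = |R|-1+3 = |R|+2 > |R|$.

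The only delicate point, and the step I would take most care over, is the claim that the inserted block can break no run other than one spanning exactly the gap $[p,p+2]$: because the newly selected symbols all lie in the contiguous positions $[p,p+2]$, they can only interfere with a run whose two occurrences straddle that block, which is precisely the spanning run handled in the second case. Everything else is routine bookkeeping, and in every case $|R'|>|R|$, as required.
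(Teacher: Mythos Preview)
Your proof is correct and follows essentially the same approach as the paper's: both argue that if the three $\sharp$-symbols cannot simply be inserted, it is because a single length-$2$ run straddles the block $S_{Sep,i}$, and trading (part of) that run for the three fresh $\sharp$-symbols strictly increases the length. The only cosmetic difference is that you delete just one occurrence of the spanning symbol (gaining $+2$) while the paper deletes both (gaining $+1$), and you are more explicit about the sub-case where $R$ already contains some but not all of the $\sharp_{i,z}$; neither changes the argument.
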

\begin{proof}
Notice that, since $R$ does not contain substring $S_{Sep,i}$, with $1 \leq i \leq m+n$,
it must contain
a run $r$ that connects two symbols that are on the left and on the right of $S_{Sep,i}$ in $S$,
otherwise $S_{Sep,i}$ can be added to 
$R$ increasing its length.
Since each symbol in $S$, hence also in $R$, has
at most two occurrences (see Lemma \ref{lem:TwoOcc}),
then $|r|=2$.
Then, starting from $R$, we can compute in 
polynomial time a run subsequence $R'$ by 
removing run $r$ and by adding substring 
$S_{Sep,i}$. Notice that, after the removal
of $r$, we can add $S_{Sep,i}$ since it 
contains three symbols each one having a single
occurrence in $S$.
Since $|S_{Sep,i}|=3$, it follows that $|S_{Sep,i}| > r$
and $|R'| > |R|$.
\end{proof}

Given a cubic graph $G=(V,E)$ and the corresponding instance
$S$ of \TwoLRS, a run subsequence $R$ of \TwoLRS on instance $S$ is called
\emph{canonical} if:

\begin{itemize}

\item for each $S_{Sep,i}$, $1 \leq i \leq m+n$,
$R$ contains $S_{Sep,i}$
(a substring denoted by $R_{Sep,i}$) 

\item for each $S(v_i)$, with $v_i \in V$, 
$R$ contains a substring $R(v_i)$ such that either
$R(v_i) = w_i w_i$  or it is a substring of length
$4$ ($w_i x_{i,j}^i x_{i,h}^i x_{i,z}^i$
or $x_{i,j}^i x_{i,h}^i x_{i,z}^i w_i$); moreover
if $\{ v_i,v_j \} \in E$, then 
at least one of $R(v_i)$ or $R(v_j)$
has length $4$

\item for each $S(e_{i,j})$, with $\{v_i,v_j\} \in E$, 
$R$ contains a substring $R(e_{i,j})$ 
such that $R(e_{i,j})$ is either of length $4$ ($e_{i,j}^1 x_{i,j}^i e_{i,j}^2 e_{i,j}^2$ or $e_{i,j}^1 e_{i,j}^1 x_{i,j}^j e_{i,j}^2$),
if one of $R(v_i)$, $R(v_j)$ has length $2$,
or of length $3$ ($e_{i,j}^1 e_{i,j}^1  e_{i,j}^2$ or
$e_{i,j}^1 e_{i,j}^2 e_{i,j}^2$).

\end{itemize}

\begin{lemma}
\label{lem:canonical}
Let $G=(V,E)$ be an instance of \MISC and let $S$ be the corresponding
instance of \TwoLRS. Given a run subsequence $R$ of $S$, %
we can compute in polynomial time  a canonical run subsequence  of $S$ of length at least $|R|$.
\end{lemma}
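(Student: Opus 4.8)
The plan is to transform an arbitrary run subsequence $R$ into a canonical one by processing the structural requirements one at a time, each time modifying $R$ without decreasing its length. I would first invoke Lemma~\ref{lem:Sep}: for every separation substring $S_{Sep,i}$ that $R$ does not fully contain, apply the lemma to obtain a strictly longer run subsequence that does contain $S_{Sep,i}$. Since there are only $m+n$ separation substrings, after at most $m+n$ applications (each computable in polynomial time) we may assume that $R$ contains every $R_{Sep,i}$, establishing the first canonical property. A key observation I would record here is that once all separation substrings are present, the symbols $\sharp_{i,z}$ act as barriers: because each $\sharp_{i,z}$ occurs only inside its own $S_{Sep,i}$ (Lemma~\ref{lem:TwoOcc}) and now appears in $R$, no run of $R$ over any other symbol can span across a separation block. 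Hence $R$ decomposes into independent pieces $R(v_i)$ on each vertex block $S(v_i)$ and $R(e_{i,j})$ on each edge block $S(e_{i,j})$, and I can treat these blocks locally.

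Next I would fix each vertex block. Within $S(v_i) = w_i\,x_{i,j}^i\,x_{i,h}^i\,x_{i,z}^i\,w_i$, the portion of $R$ contributed is a run subsequence of a length-$5$ string in which the two $w_i$'s flank three distinct $x$-symbols. I would argue that the best local choices are exactly the canonical ones: either take both $w_i$'s (a run $w_iw_i$ of length $2$) or take one $w_i$ together with the three consecutive $x^i$ symbols (length $4$). Any other local selection either has length at most the larger of these or can be replaced by one of them without loss, because taking both $w_i$'s forbids any $x^i$ between them (they would break the $w_i$-run), while taking the three $x^i$'s as a single run permits at most one adjacent $w_i$. I would normalize each $R(v_i)$ to one of these two forms, preserving length. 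The analogous normalization applies to each edge block $S(e_{i,j}) = e_{i,j}^1 x_{i,j}^i e_{i,j}^2 e_{i,j}^1 x_{i,j}^j e_{i,j}^2$: the candidate runs are the $e^1$-run and the $e^2$-run together with at most one of the $x$-symbols, giving local length $3$ (just $e^1 e^1 e^2$ or $e^1 e^2 e^2$) or length $4$ when one $x$-symbol is included.

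The crux, and the step I expect to be the main obstacle, is enforcing the consistency conditions that link blocks sharing an $x$-symbol, namely that for each edge $\{v_i,v_j\}$ at least one of $R(v_i)$, $R(v_j)$ has length $4$, and that $R(e_{i,j})$ has length $4$ precisely when one of $R(v_i)$, $R(v_j)$ has length $2$. The difficulty is that the symbol $x_{i,j}^i$ appears in both $S(v_i)$ and $S(e_{i,j})$, so it can be used as a run in at most one of them; committing it to the length-$4$ form in one block constrains the other. I would handle this by a careful case analysis on how $R$ currently uses each shared $x$-symbol, showing that whenever the linking conditions fail I can shift which block claims the $x$-symbol, or upgrade a short block to length $4$, in a way that never decreases total length. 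In particular, if some $R(v_i)$ with all neighbors also of length $4$ created a conflict, I would reallocate so that exactly one endpoint of each edge is the ``length-$2$'' vertex, mirroring an independent-set choice, and then set each edge block's length ($4$ versus $3$) accordingly. Since every modification is local to a bounded-size block and strictly respects or increases $|R|$, and there are only polynomially many blocks, the whole procedure runs in polynomial time and yields a canonical run subsequence of length at least $|R|$.
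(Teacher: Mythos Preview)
Your outline follows essentially the same route as the paper's proof: invoke Lemma~\ref{lem:Sep} to get all separation blocks, then normalise each vertex block and each edge block locally, then repair the linking conditions. Two points, however, are not handled correctly.

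First, in the vertex-block step you claim you can normalise each $R(v_i)$ to either $w_iw_i$ (length~$2$) or to a length-$4$ form ``preserving length''. This fails when $|R(v_i)|=3$: to reach length~$4$ you must add the missing symbol $x_{i,\cdot}^i$, but that symbol may currently sit inside the corresponding edge block $R(e_{i,\cdot})$. The paper treats this case explicitly: since at least two of the three $x^i$-symbols are already in $R(v_i)$, at most one edge block can still hold the third; one upgrades $R(v_i)$ to length~$4$ and simultaneously downgrades that single edge block by one, so the total length does not drop. Your ``local, length-preserving'' phrasing hides exactly this cross-block trade.

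Second, your description of the ``crux'' has the conflict backwards. The canonical condition requires that for every edge $\{v_i,v_j\}$ \emph{at least one} of $R(v_i),R(v_j)$ has length~$4$; both being length~$4$ is perfectly fine (and then $R(e_{i,j})$ is forced to length~$3$ because both $x$-symbols are consumed). The case that needs repair is when \emph{both} endpoints have length~$2$. The paper fixes this by upgrading one endpoint, say $R(v_i)$, from $w_iw_i$ to $w_i x_{i,j}^i x_{i,h}^i x_{i,z}^i$; since $R(e_{i,j})$ already uses $x_{i,j}^j$ (not $x_{i,j}^i$), only $x_{i,h}^i$ and $x_{i,z}^i$ may need to be pulled out of their edge blocks, costing at most~$2$ while the vertex block gains~$2$. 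Your sentence about ``$R(v_i)$ with all neighbours also of length~$4$'' creating a conflict, and reallocating so that ``exactly one endpoint of each edge is the length-$2$ vertex'', does not match the canonical definition and should be replaced by the argument above.
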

\begin{proof}
Consider a run subsequence $R$ of $S$. 
First, notice that
by Lemma \ref{lem:Sep} we assume that $R$ contains
each symbol $\sharp_{i,p}$, with $1 \leq i \leq n+m$
and $1 \leq p \leq 3$. 
We start by proving some bounds on the run subsequence
of $S(v_i)$ and $S(e_{i,j})$.

Consider a substring $R(v_i)$ of $S(v_i)$, $1 \leq i \leq n$. 
Each run subsequence of $S(v_i)$ can have length at most $4$, since $|S(v_i)| = 5$ and if run $w_i w_i$ belongs to $R(v_i)$,  then $R(v_i) = w_i w_i$. 
It follows that if $|R(v_i)| >2$, then it cannot contain the two occurrences of symbol $w_i$. %
Notice that the two possible run subsequences of length $4$ of $S(v_i)$ are $x_{i,j}^i x_{i,h}^i x_{i,z}^i  w_i$ and $ w_i x_{i,j}^i x_{i,h}^i x_{i,z}^i$.

Consider a run subsequence $R(e_{i,j})$ of $S(e_{ij}) = e_{i,j}^1 x_{i,j}^i e_{i,j}^2 e_{i,j}^1 x_{i,j}^j e_{i,j}^2$.
First, we prove that a run subsequence of $S(e_{ij})$ has length at most $4$ and in this case it must contain
at least one of $x_{i,j}^i $, $x_{i,j}^j$.
By its interleaved construction, at most one of runs $e_{i,j}^1 e_{i,j}^1$, $e_{i,j}^2 e_{i,j}^2$ can belong to $R(e_{i,j})$.
Moreover if $e_{i,j}^1 e_{i,j}^1$ 
($e_{i,j}^2 e_{i,j}^2$, respectively) belongs to 
$R(e_{i,j})$, then $|R(e_{i,j})|\leq 4$, since the longest
run in $S(e_{i,j})$ is then
$e_{i,j}^1 e_{i,j}^1 x_{i,j}^j e_{i,j}^2$
($e_{i,j}^1 x_{i,j}^i e_{i,j}^2 e_{i,j}^2$, 
respectively).
If none of runs $e_{i,j}^1 e_{i,j}^1$, $e_{i,j}^2 e_{i,j}^2$ belongs to $R(e_{i,j})$, then
$|R(e_{i,j})| \leq 4$, since 
$|S(e_{i,j})| = 6$; in this case both
$x_{i,j}^i$ and $x_{i,j}^j$ must be in 
$R(e_{i,j})$ to have $|R(e_{i,j})| = 4$.

Now, we compute a canonical run subsequence $R'$
of $S$ of length at least $|R|$.
Consider $R(v_i)$, $1 \leq i \leq n$,
and $R(e_{i,j})$, with $\{ v_i, v_j\} \in E$.

If $|R(v_i)|=4$, then define
$R'(v_i) =w_i x_{i,j}^i x_{i,h}^i x_{i,z}^i  $
(or equivalently define $R'(v_i) = x_{i,j}^i x_{i,h}^i x_{i,z}^i w_i$).

If $|R(v_i)| = 3$, then by construction of 
$S(v_i)$ at least two of $x_{i,j}^i$, 
$x_{i,h}^i$, $x_{i,z}^i$ belong to $R(v_i)$.
Then, at most one of $R(e_{i,j})$, $R(e_{i,h})$, $R(e_{i,z})$
can contain a symbol in $\{x_{i,j}^i, x_{i,h}^i, x_{i,z}^i\}$,
assume w.lo.g. that
$x_{i,j}^i$ belongs to $R(e_{i,j})$.
We define 
$
R'(v_i) =w_i x_{i,j}^i x_{i,h}^i x_{i,z}^i  
$
(or equivalently $R'(v_i) = x_{i,j}^i x_{i,h}^i x_{i,z}^i w_i$)
and $R'(e_{i,j}) = e_{i,j}^1 e_{i,j}^1  e_{i,j}^2$ (or equivalently 
$R'(e_{i,j}) = e_{i,j}^1 e_{i,j}^2 e_{i,j}^2$).
Since $|R(e_{i,j})| \leq 4$, 
we have that 
\[
|R'(e_{i,j})| \geq |R(e_{i,j})| -1
\]
and 
\[|R'(v_i)| = |R(v_i)| + 1
\]
It follows that the size of $R'$ is not decreased with
respect to the length of $R$.

If $|R(v_i)| = 2$, then define $R'(v_i) = w_i w_i$.

By construction of $R'$, either 
$|R'(v_i)|=4$ and 
\[
R'(v_i) = w_i x_{i,j}^i x_{i,h}^i x_{i,z}^i  
\text{  or  }
R'(v_i) = x_{i,j}^i x_{i,h}^i x_{i,z}^i w_i  
\]
or $|R'(v_i)|=2$ and 
\[
R'(v_i)= w_i w_i.
\]
Again the size of $R'$ is not decreased with
respect to the size of $R$.

In order to compute a canonical run subsequence,
we consider  an edge $\{ v_i,v_j \} \in E$
and the run subsequences $R'(v_i)$
and $R'(v_j)$ of $S(v_i)$, $S(v_j)$, 
respectively.
Consider the case that $R'(v_i)= w_i w_i$ and $R'(v_j) = w_j w_j$. Then
by construction $|R'(e_{i,j})| = 4$ and assume with
loss of generality that 
$R'(e_{i,j}) = e_{i,j}^1  e_{i,j}^1  x_{i,j}^j e_{i,j}^2$.
Now, we can modify $R'$ so that
\[
R'(v_i) = w_i x_{i,j}^i x_{i,h}^i x_{i,z}^i 
\]
by eventually removing $x_{i,h}^i$,  $x_{i,z}^i$
from $R'(e_{i,h})$ and $R'(e_{i,z})$. 
In this way, we decrease by at most
one the length of each of $R'(e_{i,h})$, $R'(e_{i,z})$
and we increase of two the length of $R'(v_i)$.
It follows that the length of $R'$ is not 
decreased by this modification.
By iterating this modification, we obtain
that for each edge $\{ v_i,v_j \} \in E$
at most one of $R'(v_i)$, $R'(v_j)$
has length two.

The run subsequence $R'$ we have built is then a canonical run subsequence of $S$ such that $|R'|\geq |R|$.
\end{proof}

Now, we are ready to prove the main results of the 
reduction.

\begin{lemma}
\label{lem:hard1}
Let $G=(V,E)$ be an instance of \MISC
and let $S$ be the corresponding instance
of \TwoLRS.
Given an independent set $I$ of size at least $q$ 
in $G$,
we can compute in polynomial time 
a run subsequence of $S$ of length
at least $5q+4(n-q)+3m+3(n+m)$.
\end{lemma}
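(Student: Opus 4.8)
The plan is to build the claimed run subsequence $R$ explicitly in the canonical shape of Lemma~\ref{lem:canonical}, keeping every separation block in full and letting membership in $I$ dictate which fragment of each vertex block and each edge block to take. I may assume $|I| = q$ (discard vertices from $I$ if it is larger; this keeps $I$ independent and only decreases the bound, and the claim is stated with ``at least''). The guiding idea is that a vertex placed in $I$ will be represented cheaply inside its own block but will ``free'' its three $x$-symbols for use in the three incident edge blocks, where each then buys one extra symbol.

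Concretely, I would keep all $n+m$ separation blocks $S_{Sep,i}=\sharp_{i,1}\sharp_{i,2}\sharp_{i,3}$, contributing $3(n+m)$. For $v_i\in I$ I take $R(v_i)=w_iw_i$ (length $2$); for $v_i\notin I$ I take $R(v_i)=w_i x_{i,j}^i x_{i,h}^i x_{i,z}^i$ (length $4$), using all three of the block's $x$-symbols, which all carry superscript $i$. For an edge $\{v_i,v_j\}$ with $i<j$: if one endpoint lies in $I$ I take a length-$4$ fragment of $S(e_{ij})$ containing the $x$-symbol of that endpoint, namely $e_{i,j}^1 x_{i,j}^i e_{i,j}^2 e_{i,j}^2$ when $v_i\in I$ and $e_{i,j}^1 e_{i,j}^1 x_{i,j}^j e_{i,j}^2$ when $v_j\in I$; if neither endpoint lies in $I$ I take the length-$3$ fragment $e_{i,j}^1 e_{i,j}^1 e_{i,j}^2$.

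The step needing real care is verifying that $R$ is a genuine run subsequence, i.e.\ that no symbol contributes two runs. Every symbol other than the $x_{i,j}^i$ lives in a single block and is selected there as one contiguous run, so the only issue is the shared symbols: $x_{i,j}^i$ occurs once in $S(v_i)$ and once in $S(e_{ij})$. By the rules above it is selected inside $S(v_i)$ exactly when $v_i\notin I$ and inside $S(e_{ij})$ exactly when $v_i\in I$, hence in exactly one block, giving a single run. This is precisely where independence of $I$ is essential: it guarantees that each edge has at most one endpoint in $I$, so the length-$4$ edge fragment frees and reuses one well-defined $x$-symbol with no clash against the vertex choices. One mild nuisance is the convention $i<j$ in the names $x_{i,j}^i,x_{i,j}^j$, which is what forces the two symmetric cases for the edge fragment.

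It remains to total the length, which is pure bookkeeping. Separations give $3(n+m)$; the $q$ vertices of $I$ give $2q$ and the other $n-q$ vertices give $4(n-q)$. Since $G$ is cubic and $I$ is independent, the edges meeting $I$ are pairwise distinct and number exactly $3q$, each contributing $4$, while the remaining $m-3q$ edges contribute $3$, so the edge total is $12q+3(m-3q)=3m+3q$. Summing gives $|R| = 3(n+m)+2q+4(n-q)+3m+3q = 7n+6m+q = 5q+4(n-q)+3m+3(n+m)$, exactly the claimed bound. The whole construction is a selection of positions of $S$ and is clearly computable in polynomial time.
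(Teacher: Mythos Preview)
Your construction is correct and matches the paper's proof essentially line for line: the same choices for $R(v_i)$, $R(e_{i,j})$, and the separation blocks, the same verification that each $x_{i,j}^i$ is picked in exactly one block, and the same final count. The only cosmetic difference is that you total the edge contributions via the observation that exactly $3q$ edges touch $I$, whereas the paper reaches the same number by crediting each $v_i\in I$ with the three unit gains in its incident edge blocks.
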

\begin{proof}
We construct a subsequence run $R$ of $S$ 
as follows:
\begin{itemize}

\item For each $v_i \in I$, define
for the substring $S(v_i)$ the run subsequence $R(v_i) = w_i w_i$; %

\item For each $v_i \in V \setminus I$, define
for the substring $S(v_i)$ the run subsequence 
$R(v_i) = w_i x_{i,j}^i x_{i,h}^i x_{i,z}^i$;

\item For each $\{v_i,v_j\} \in E$,
if $v_i \in I$ (or $v_j \in I$, respectively)
define for the substring $S(e_{i,j})$ the run subsequence 
$R(e_{i,j})=e_{i,j}^1 x_{i,j}^i e_{i,j}^2 e_{i,j}^2$ 
($R(e_{i,j})=e_{i,j}^1 e_{i,j}^1 x_{i,j}^j e_{i,j}^2$, respectively);
if both $v_i, v_j \in V \setminus I$,
define for the substring $S(e_{i,j})$ the run subsequence 
$R(e_{i,j}) = e_{i,j}^1 e_{i,j}^1 e_{i,j}^2$ %

\end{itemize}

Moreover, $R$ contains each separation substring 
of $S$, denoted by $R_{Sep,i}$, $1 \leq i \leq n+m$.

First, we prove that $R$ is a run subsequence,
that is $R$ contains a single run for
each symbol in $\Sigma$. This property
holds by construction for each symbol
in $\Sigma$ having only occurrences
in $R(v_i)$, with $v_i \in V$,
$R(e_{i,j})$, with $\{v_i, v_j\} \in E$,
and $R_{Sep,i}$, $1 \leq i \leq n+m$.
What is left to prove is that
$x^i_{i,j}$ appears in at most
one of $R(v_i)$, with $v_i \in V$,
$R(e_{i,j})$, with $\{v_i, v_j\} \in E$.
Indeed, by construction, $R(e_{i,j})$ contains
$x^i_{i,j}$ only if $R(v_i) =w_i w_i $.
It follows that $R$ is a run subsequence of $S$.

Consider the length of $R$.
For each $v_i \in V \setminus I$, 
$R$ contains a run subsequence of $S(v_i)$ 
of length $4$.    
For each $v_i \in I$, $R$ contains a
run subsequence of length $2$.
For each $\{v_i,v_j\} \in E$,  
with $v_i, v_j \in V \setminus I$, $R$ contains
a run subsequence of $S(e_{i,j})$ of length $3$;
for each $\{v_i,v_j\} \in E$,  
with $v_i \in I$ or $ v_j \in I$,
$R$ contains a run subsequence of $S(e_{i,j})$
of length $4$.
Finally, each separation substring 
$R_{Sep,i}$, $1 \leq i \leq n+m$,
in $R$ has length $3$.
Hence the total length of $R$ is at least
$5q + 4(n-q)+3m+3(n+m)$ (by accounting,
for each $R(v_i)$ of length $2$,
the increasing of the length of
the three  run subsequences $R(e_{i,j})$, $R(e_{i,h})$,
$R(e_{i,z})$ from $3$ to $4$
 to $R(v_i)$). 
\end{proof}

\begin{lemma}
\label{lem:hard2}
Let $G=(V,E)$ be an instance of \MISC
and let $S$ be the corresponding instance
of \TwoLRS.
Given a run subsequence of $S$ of length
at least $5q+4(n-q)+3m+3(n+m)$, we compute
in polynomial time an independent of $G$ of size 
at least $q$.
\end{lemma}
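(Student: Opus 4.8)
The plan is to prove the converse of Lemma~\ref{lem:hard1} by first normalizing the given run subsequence into canonical form and then reading off an independent set from the vertices whose gadget contributes only a $w_i w_i$ run. First I would apply Lemma~\ref{lem:canonical} to the input run subsequence $R$ to obtain, in polynomial time, a canonical run subsequence $R'$ of $S$ with $|R'| \geq |R| \geq 5q + 4(n-q) + 3m + 3(n+m)$. From now on I work only with $R'$, whose structure is completely pinned down by the canonical property: each separation substring contributes exactly $3$, each $R'(v_i)$ has length $2$ (namely $w_i w_i$) or length $4$, and each $R'(e_{i,j})$ has length $3$ or $4$.

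Next I would define the candidate solution $I = \{ v_i \in V : R'(v_i) = w_i w_i \}$, i.e.\ the vertices whose vertex gadget has length $2$. The independence of $I$ is immediate from the last clause of the canonical definition: for every edge $\{v_i, v_j\} \in E$ at least one of $R'(v_i), R'(v_j)$ has length $4$, so $v_i$ and $v_j$ cannot both lie in $I$. Hence $I$ is an independent set of $G$, and it is clearly computable in polynomial time from $R'$.

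It then remains to show $|I| \geq q$, and here I would compute $|R'|$ exactly as a function of $q' := |I|$. The separation substrings contribute $3(n+m)$ and the vertex substrings contribute $2q' + 4(n - q')$. For the edges, the canonical form forces $R'(e_{i,j})$ to have length $4$ precisely when one endpoint lies in $I$ and length $3$ when neither endpoint does; since $G$ is cubic and $I$ is independent, the edges incident to the $q'$ vertices of $I$ are exactly $3q'$ distinct edges (none shared and none internal to $I$), so the edge contribution is $4(3q') + 3(m - 3q') = 3m + 3q'$. Summing the three parts gives $|R'| = 7n + 6m + q'$. Comparing with the hypothesis $|R'| \geq 5q + 4(n-q) + 3m + 3(n+m) = 7n + 6m + q$ yields $q' \geq q$, as required.

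The only real obstacle is the exact edge count: the argument crucially uses both cubicity (three incident edges per vertex of $I$) and independence (so that these edge sets are pairwise disjoint and no edge has both endpoints in $I$) to conclude that exactly $3q'$ edges are \emph{long}. If either property failed, the clean identity $|R'| = 7n + 6m + q'$ would degrade into an inequality and the matching bound could be lost; it is precisely the canonical normalization of Lemma~\ref{lem:canonical} that guarantees these structural facts hold and pins the edge lengths to the endpoint choices.
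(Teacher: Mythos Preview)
Your proposal is correct and follows the same approach as the paper: apply Lemma~\ref{lem:canonical} to normalize, take $I=\{v_i:|R'(v_i)|=2\}$, get independence from the canonical property that no edge has both endpoints with length-$2$ gadgets, and then count. Your counting is in fact more carefully spelled out than the paper's own argument --- you compute $|R'|=7n+6m+q'$ exactly by using cubicity and independence of $I$ to see that precisely $3q'$ edge gadgets are long --- whereas the paper just asserts the conclusion; but the underlying idea is identical.
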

\begin{proof}
Consider a run subsequence $R$ of $S$ of length
at least $5q+4(n-q)+3m+3(n+m)$. 
By Lemma \ref{lem:canonical},
we assume that $R$ is a canonical run subsequence of $S$.
It follows that we can define an independent $V'$ of
size at least $q$ in $G$ as follows:
\[
V' = \{ v_i: |R(v_i)| = 2\}
\]
By the definition of canonical run subsequence,
it follows that $V'$ is an independent set,
since if $|R(v_i)| = |R(v_j)|=2$, 
with $1 \leq i,j \leq n$,
then $\{v_i,v_j\} \notin E$.
Furthermore, by the definition of canonical 
run subsequence,
since $|R| \geq 5q+4(n-q)+3m+3(n+m)$, 
there are at least 
$q$ run subsequences $R(v_i)$, with $1 \leq i \leq n$, 
of length two such
that $|R(e_{i,j})|=|R(e_{i,h})|= |R(e_{i,z})|=4$,
with $\{v_i,v_j\}, \{v_i,v_h\}, \{v_i,v_z\} \in E$.
It follows that $|V'|\geq q$.
\end{proof}

Now, we can prove the main result
of this section.

\begin{theorem}
\label{thm:apx}
\TwoLRS is APX-hard. 
\end{theorem}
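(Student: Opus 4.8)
The plan is to show that the construction $f$ mapping a cubic graph $G=(V,E)$ to the string $S$, together with a decoding map $g$, forms an L-reduction (Definition~\ref{def:L-reduction}) from \MISC to \TwoLRS. Since \MISC is APX-hard~\cite{DBLP:journals/tcs/AlimontiK00}, Theorem~\ref{thm:noptas} then implies that \TwoLRS admits no PTAS unless \MISC does, establishing APX-hardness. Here the source problem is $A=\MISC$ with optimum $opt_A(G)$ and the target is $B=\TwoLRS$ with optimum $opt_B(S)$.

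The first step is to pin down $opt_B(S)$ exactly. Simplifying the quantity appearing in Lemmas~\ref{lem:hard1} and~\ref{lem:hard2} gives $5q+4(n-q)+3m+3(n+m) = q+7n+6m$. Lemma~\ref{lem:hard1} shows that an independent set of size $q$ yields a run subsequence of length at least $q+7n+6m$, so $opt_B(S) \geq opt_A(G)+7n+6m$; Lemma~\ref{lem:hard2} gives the converse inequality. Together they yield the identity $opt_B(S) = opt_A(G) + 7n + 6m$. To verify condition (i) I would use that $G$ is cubic, so $m = 3n/2$ and hence $7n+6m = 16n$, together with the Caro--Wei/Turán bound $opt_A(G) \geq n/4$ valid for graphs of maximum degree $3$. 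This gives $opt_B(S) = opt_A(G) + 16n \leq 65\, opt_A(G)$, so $\alpha = 65$ works.

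For condition (ii), the decoding map $g$ is the composition of the two earlier lemmas: given any run subsequence $R$ of $S$, first apply Lemma~\ref{lem:canonical} to obtain a canonical run subsequence $R'$ with $|R'| \geq |R|$, then extract the independent set $V' = \{v_i : |R'(v_i)| = 2\}$ exactly as in the proof of Lemma~\ref{lem:hard2}. A direct count of a canonical solution (separation substrings of total length $3(n+m)$; each $R'(v_i)$ of length $2$ or $4$; and each $R'(e_{i,j})$ of length $4$ when an endpoint lies in $V'$ and $3$ otherwise, using independence and cubicness to count the $3|V'|$ edges incident to $V'$) yields the identity $|R'| = |V'| + 7n + 6m$. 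Combining this with $|R'| \geq |R|$ gives $|V'| \geq |R| - 7n - 6m$, whence
\[
opt_A(G) - |V'| \;\leq\; \bigl(opt_A(G) + 7n + 6m\bigr) - |R| \;=\; opt_B(S) - |R|,
\]
so condition (ii) holds with $\beta = 1$.

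Both $f$ and $g$ run in polynomial time by construction and by Lemmas~\ref{lem:canonical} and~\ref{lem:hard2}, so $(f,g,\alpha=65,\beta=1)$ is an L-reduction and the theorem follows from Theorem~\ref{thm:noptas}. The main obstacle is the exact-value step $opt_B(S) = opt_A(G) + 7n + 6m$ rather than a mere one-sided inequality: this is precisely what forces $\beta$ to be a constant, and it relies on the non-decreasing-length guarantee of Lemma~\ref{lem:canonical} together with the exact length accounting for canonical subsequences. The only ingredient beyond the stated lemmas is the elementary independence-number lower bound $n/4$ needed to bound $\alpha$.
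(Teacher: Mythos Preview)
Your proposal is correct and follows essentially the same route as the paper: both establish an L-reduction from \MISC with $\alpha=65$ (via $m=\tfrac{3}{2}n$ and the $n/4$ independence lower bound) and $\beta=1$ (via Lemmas~\ref{lem:canonical}, \ref{lem:hard1}, \ref{lem:hard2}). Your presentation is slightly cleaner in that you simplify $5q+4(n-q)+3m+3(n+m)$ to $q+7n+6m$ up front and phrase the key step as the identity $opt_B(S)=opt_A(G)+7n+6m$, whereas the paper threads the same computation through the variables $p$ and $q$; the content is the same.
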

\begin{proof}
We have shown a reduction from \MISC to \TwoLRS. 
By Lemma \ref{lem:TwoOcc}, the instance of \TwoLRS we have built consists of a string with at most two occurrences for each symbol. 
We will now show that  this reduction is an L-reduction from \MISC to \TwoLRS (see Definition \ref{def:L-reduction}).

Consider an instance $I$ of \MISC
and a corresponding instance $I'$ of \TwoLRS.
Then, given any optimal solution
$opt(I')$ of \TwoLRS on instance $I'$,
by Lemma \ref{lem:hard2} it holds that

\[opt(I') \leq 5 \cdot opt(I) + 4( n-opt(I))+3m+3(n+m)
=
opt(I) + 7n + 6m
\]

In a cubic graph $G=(V,E)$, $|E| = \frac{3}{2}|V|$, hence $m = \frac{3}{2}n$. Furthermore, we can assume that an independent set has size at least $\frac{n}{4}$. 
Indeed, such an independent set $V'$ can be greedily computed as follows: pick a vertex $v$ in the graph, add it to $V'$ and delete $N[v]$ from $G$. 
At each step, we add one vertex in $V'$ and we delete at most $4$ vertices from $G$.

Since $m=\frac{3}{2}n$ and $n \leq 4 \cdot opt(I)$, we thus have that 
\[
opt(I') \leq opt(I) + 7n + 6m = opt(I) +
16\cdot n \leq opt(I) + 64 \cdot opt(I)
\]

and then $\alpha = 65$ in Definition \ref{def:L-reduction}.

Conversely, consider a solution $S'$ of length $5q + 4(n-q) + 3m + 3(n+m)$ of \TwoLRS on instance $I'$.
First, notice that by Lemma~\ref{lem:canonical}, we can
assume that $S'$ and $opt(I')$ are both canonical.
By Lemma \ref{lem:hard1}, if $opt(I) = p$, then
$opt(I') \geq 5 p + 4(n-p) + 3m + 3(n+m)$.
By Lemma \ref{lem:hard2}, 
starting from $S'$, we can compute in polynomial time a
solution $V'$ of \MISC on instance $I$, 
with $|V'| \geq q$. 
It follows that

\[
|opt(I) - |V'||  \leq
|opt(I) - q|  = |p - q| =  
\]
\[
|5p + 4(n-p) 
+3m + 3(n+m) - (5q + 4(n-q) + 3m+3(n+m))|
\leq 
\]
\[
|opt(I') - (5q + 4(n-q) + 3m +  3(n+m)) |
\]
Then $\beta = 1$ in Definition \ref{def:L-reduction}.
Thus we indeed have designed an L-reduction, therefore, the APX-hardness of \TwoLRS follows from the APX-hardness of \MISC~\cite{DBLP:journals/tcs/AlimontiK00} and from Theorem~\ref{thm:noptas}.

\end{proof}

\section{Conclusion}

In this paper, we deepen the understanding of the complexity of the recently introduced problem \LRS.
We show that the problem remains hard (even from the approximation point of view) also in the very restricted setting where each symbol occurs at most twice.
We also complete the parameterized complexity landscape.
From the more practical point of view, it is however unclear how our FPT algorithm could compete with implementations done in~\cite{DBLP:conf/wabi/SchrinnerGWSSK20}.

An interesting future direction is to further
investigate the approximation complexity of the
\LRS problem beyond APX-hardness. 
Note that a trivial $\min(|\Sigma|,occ)$-approximation algorithm 
($occ$ is the maximum number of occurrences
of a symbol in the input $S$)
can be designed by taking the solution
having maximum length between:
(1) a solution having one occurrence for each
symbol in $\Sigma$ and (2) a solution consisting of
the $a$-run of maximum
length, among each $a \in \Sigma$.
This leads to a $\sqrt{|S|}$-approximation algorithm.
Indeed, if the $a$-run of maximum length is greater than 
$\sqrt{|S|}$, then solution (2) has length at least $\sqrt{|S|}$, thus leading to the desired 
approximation factor.
If this is not the case, then each symbol in $\Sigma$ has
less then $\sqrt{|S|}$ occurrences,
thus a solution of \LRS on instance $S$ has at length
smaller than $|\Sigma| \sqrt{|S|}$.
It follows that (1) is a solution with the desired 
approximation factor.
We let for future work closing the gap between 
the APX-hardness and the $\sqrt{|S|}$-approximation factor of \LRS.

\bibliography{references}
\end{document}